\newtheorem{theorem}{Theorem}[section]
\newtheorem{corollary}[theorem]{Corollary}
\newtheorem{lemma}[theorem]{Lemma}
\newtheorem{proposition}[theorem]{Proposition}
\theoremstyle{definition}
\newtheorem{definition}[theorem]{Definition}
\theoremstyle{remark}
\theoremstyle{definition}
\numberwithin{equation}{section}
\newcommand{\C}{\mathbb C}
\newcommand{\R}{\mathbb R}
\newcommand{\Z}{\mathbb Z}
\newcommand{\E}{{\mathcal E}}
\newcommand{\Ch}{{\mathcal C}}
\newcommand{\eps}{\varepsilon}
\newcommand{\set}[1]{\left\{#1\right\}}
\newcommand{\su}[1]{\mathfrak{su}(#1)}
\newcommand{\ualg}[1]{\mathfrak{u}(#1)}
\newcommand{\matrdpd}[4]{\left( \begin{array}{cc} #1 & #2 \\[0.2cm]  #3 & #4 \end{array} \right)}
\newcommand{\A}{\mathbf A}
\newcommand{\SU}{\mathbf S}
\newcommand{\lie}{\mathfrak{g}}
\newcommand{\LL}{{\mathcal L}}
\newcommand{\MM}{{\mathcal M}}
\begin{document}

\title[Solitons and vortices]{Solitons in gauge theories: existence and dependence on the charge}%
\author{Claudio Bonanno}%
\address{Dipartimento di Matematica, Universit\`a di Pisa, Largo Bruno Pontecorvo n.5, 56127 Pisa, Italy}%
\email{bonanno@dm.unipi.it}%

\begin{abstract}
In this paper we review recent results on the existence of non-topological solitons in classical relativistic nonlinear field theories. We follow the Coleman approach, which is based on the existence of two conservation laws, energy and charge. In particular we show that under mild assumptions on the nonlinear term it is possible to prove the existence of solitons for a set of admissible charges. This set has been studied for the nonlinear Klein-Gordon equation, and in this paper we state new results in this direction for the Klein-Gordon-Maxwell system.
\end{abstract}

\maketitle

\section{Introduction} \label{intro}

In this paper we are interested in the existence of \emph{non-topological soliton solutions} in relativistic classical field theories. The principle governing the existence of these solutions, which contrarily to the topological ones are assumed to vanish at infinity, is due to Coleman \cite{col} and is given by the presence of two conservation laws, energy and charge. In fact solitons are found as critical points of the energy functional restricted to the manifold of function with fixed charge, and in particular as global minimizers of the energy on the manifold. These two properties together imply the main feature of a soliton solution, namely concentration and orbital stability. 

The main field theories are built on the existence of a variational principle, and the corresponding equations are found as Euler-Lagrange equations of an action functional. Moreover it is nowadays well known that the existence of concentrated solutions for field equations is implied by the presence of a nonlinear term in the equation. A first appearance of this principle can be identified in the words of Louis de Broglie 
\begin{quote}
\emph{``Considerations lead me today to believe that the particle must be represented, not by a true point singularity of $u$, but by a very small singular region in space where $u$ would take on a very large value and would obey a non-linear equation, of which the linear equation of Wave Mechanics would be only an approximate form valid outside the singular region. The idea that the equation of propagation of $u$, unlike the classical equation of $\Psi$, is in principle non-linear now strikes me as absolutely essential.''} (\cite[pag. 99]{broglie})
\end{quote}
Finally, for equations of variational nature, the classical Noether's Theorem states that  the existence of conservation laws follows from the existence of group actions which leave the Lagrangian density invariant.

So if we want to study field theories to which to apply the Coleman approach, we have to consider nonlinear field equations with symmetries, and in particular gauge symmetries which are related to the charge invariance.

In this paper we review some recent results in this direction \cite{nlkg-stab,hylo-sol,hylo-vort,bb,bf-vort,bf-stab-g,bon10} for relativistic theories, keeping the discussion to a level as general as possible. Many results are available for non-relativistic theories, for example for the nonlinear Schr\"odinger equations the existence of solitons has been proved in \cite{cazlions}, as well for relativistic theories, see \cite{gss} for a general theory of orbital stability for solutions of Hamiltionian systems. We believe that the advantages of our approach are that the assumptions can be always explicitly verified for a given model.

In Section \ref{sec:abstract} we introduce the notions of matter field and gauge potentials, the definition of solitons and the abstract principle which implies their existence. Then in Sections \ref{sec:nlkg} and \ref{sec:abelian} we show the applications of this principle to relativistic gauge theories, where our aim is to impose on the nonlinear term as less assumptions as possible but sufficient for the existence of solitons. Our approach makes evident the role of the charge, indeed our assumptions are sufficient to prove existence of solitons for a given set of charges. So in general we pay the weakness of the assumptions by the impossibility of proving the existence of solitons for any given charge. The set of admissible charges has been discussed in \cite{bon10} for the nonlinear Klein-Gordon equation of Section \ref{sec:nlkg}. In this paper we prove in Section \ref{sec:nr-abelian} similar new results for the nonlinear Klein-Gordon-Maxwell system, showing in particular that there exist solitons with arbitrarily big electric charge, see Theorem \ref{mio-ab}.

\section{The abstract theory} \label{sec:abstract}

In this section we introduce the abstract framework in which we give precise definitions of  solitary waves, solitons and vortices. We refer to \cite{hylo-sol, bf12, hylo-vort} for more details and discussions.

When talking about gauge theories we mean equations for a couple of fields $(\psi(t,x),\Gamma(t,x))\in X$ with
\[
\psi : \R \times \R^{3} \to \C^{N}\, , \quad N\ge 1
\]
and $\Gamma = (\Gamma_{j})$ for $j=0,1,2,3$ with
\[
\Gamma_{j}: \R \times \R^{3} \to \lie\, , \quad j=0,1,2,3
\]
where $\lie$ is the Lie algebra of a subgroup $G$ of the unitary group $U(N)$. We call $\psi$ the \emph{matter field} and $\Gamma$ the \emph{gauge potentials}. These equations are assumed to be the Euler-Lagrange equations of an action functional
\[
{\mathcal S} = \int_{\R \times \R^{3}}\, \LL(t,x,\psi,\partial_{t}\psi,\nabla \psi, \Gamma, \partial_{t}\Gamma,\nabla \Gamma)\, dtdx
\]
with Lagrangian density $\LL$. In variational systems, the Noether's Theorem implies the existence of a conservation law for any one-parameter Lie group of transformations which leaves invariant the Lagrangian. At least two kind of group actions can be considered\footnote{We denote by square brackets the action of a group member, not to make confusion with the dependence on the space-time variables.}:
\begin{itemize}
\item  \emph{actions on the variables} - this is the case of a group $H=\set{h_{\lambda}}$ which acts on $\R\times \R^{3}$ and induces on $X$ the representation 
\[
H \times X \ni (h_{\lambda},\psi,\Gamma) \mapsto \left(T_{h_{\lambda}}\psi, T_{h_{\lambda}}\Gamma \right) \left( t,x\right) =\Big(\psi(h_{\lambda}[t,x]),\Gamma(h_{\lambda}[t,x]) \Big)\in X\, ;
\]

\item  \emph{gauge actions} - this is the case of a group $H=\set{h_{\lambda}}$ which acts on $\C^{N}\times \lie$ and induces on $X$ the representation  
\[
H \times X \ni (h_{\lambda},\psi,\Gamma) \mapsto \left(T_{h_{\lambda}}\psi, T_{h_{\lambda}}\Gamma \right) \left( t,x\right) =h_{\lambda}\Big[ \psi(t,x), \Gamma(t,x) \Big]\in X\, .
\]
\end{itemize}
For the first kind in this paper we consider Lagrangian densities which are invariant under the action of the Poincar\'e group, giving rise to second order equations in time. Hence we have the following ten conservation laws:
\begin{itemize}
\item  $\E$ - \emph{energy}, the quantity associated to the invariance of the Lagrangian density with respect to time translations $h_{\lambda}(t,x) = (t+\lambda,x)$. We assume that the energy assumes non-negative values;

\item  $\vec{P}$ - \emph{momentum}, the quantity associated to the invariance of the Lagrangian density with respect to space translations, namely the action $h_{\lambda}(t,x) = (t,x+\lambda v)$ for any direction $v\in \R^{3}$;

\item $\vec{L}$ - \emph{angular momentum}, the quantity associated to the invariance of the Lagrangian density with respect to space rotations, namely any one-parameter subgroup $\set{h_{\lambda}}$ of the orthogonal group ${\mathcal O}(3)$, acting as $h_{\lambda}(t,x) = (t,h_{\lambda}(x))$;

\item $\vec{V}$ - \emph{ergocenter velocity}, the quantity associated to the invariance of the Lagrangian density with respect to Lorentz boosts.
\end{itemize}
For the second kind we consider the case of actions of $G<U(N)$ on $\C^{N}$ by standard matrix representation, and on $\lie$ by a translate of the adjoint representation. However, looking at $\C^{N}\times \lie$ as the fiber of a trivial bundle with $\R\times \R^{3}$ as base space, the main feature is whether the gauge action depends or not on the point of the base space. This is discussed in Section \ref{sec:gvl} and gives rise to the notions of \emph{global} and \emph{local gauge actions}. In both cases, associated to the gauge actions we have $k$ $(=\dim G)$ conservation laws, and we call \emph{hylomorphic charges} the respective invariant quantities $\Ch$.

To give definitions of \emph{solitary waves} and \emph{solitons}, we need to introduce a dynamical point of view. We can think of the solutions $(\psi(t,x),\Gamma(t,x))\in X$ of our field equations as orbits of a dynamical system defined by a time evolution map $U:\R\times Y \to Y$ defined for all $t\in \R$, where $Y$ is the phase space of the system given by the couples $(\Psi, \Omega)$, with $\Psi = (\psi, \partial_{t} \psi)$ and $\Omega = (\Gamma, \partial_{t} \Gamma)$. The form of $Y$ comes from the fact that the equations are of second order in time. So if $(\Psi_0, \Omega_{0})\in Y$ are the initial conditions of our equations, the evolution of the system is described by 
\begin{equation}\label{evol}
(\Psi(t,x),\Omega(t,x))=U\Big(t,(\Psi_0, \Omega_{0})\Big)
\end{equation}

We assume that for all $(\psi,\Gamma)\in X$ it holds $\psi \in L^{2}(\R^{3},\C^{N})$. This implies that for the orbits of our system we can define the \emph{barycenter} of the matter field as
\[
\vec{q}_{_{\psi }}(t)=\frac{\int_{\R^{3}} x\left| \psi (t, x) \right|_{\C^{N}}^{2}dx}{\int_{\R^{3}} \left| \psi (t, x) \right|_{\C^{N}}^{2}dx}
\]
The term \emph{solitary wave} is usually used for solutions of field equations for which the energy of the matter field is localized. Using the notion of barycenter, we give a formal definition of solitary wave. 
\begin{definition} \label{solw}
A state $(\Psi_{0},\Omega_{0})$ is called \emph{solitary wave} if for any $\varepsilon >0$ there exists a radius $R>0$ such that for all $t\in \R$
\[
\int_{\R^{3}} \left| \psi (t, x) \right|_{\C^{N}}^{2} dx-\int_{B_{R}( \vec{q}_{_{\psi }}(t))}\left| \psi (t, x) \right|_{\C^{N}}^{2} dx<\varepsilon
\]
where $\Psi(t,x) = (\psi(t,x),\partial_{t}\psi(t,x))$ and $(\Psi(t,x),\Omega(t,x))=U\Big(t,(\Psi_{0},\Omega_{0})\Big)$. Moreover $B_{R}( \vec{q}_{_{\psi }}(t))$ denotes the ball in $\R^{3}$ of radius $R$ and center $\vec{q}_{_{\Psi }}(t)$.
\end{definition}
\begin{definition} \label{solw-v}
A state $(\Psi_{0},\Omega_{0})$ is called \emph{vortex} if it is a solitary wave and the angular momentum $\vec{L}(\psi(t,x),\Gamma(t,x))$ does not vanish.
\end{definition}
The \emph{solitons} are solitary waves which are \emph{orbitally stable}. 
\begin{definition} \label{ds}
A state $(\Psi_{0},\Omega_{0})$ is called \emph{soliton} if it is a solitary wave and the matter field is orbitally stable, that is there exists a finite dimensional manifold $\MM$ with $\Psi_{0}\in \MM$ such that:
\begin{itemize}
\item $\MM$ is $U$-\emph{invariant}, that is for any $(\Phi_{0},\tilde \Omega_{0})$ with $\Phi_{0}\in \MM$, it holds $\Phi(t,x)\in \MM$ for all $t\in \R$, where $(\Phi(t,x),\tilde \Omega(t,x))$ is the evolution of $(\Phi_{0},\tilde \Omega_{0})$ as defined in \eqref{evol};
\item $\MM$ is $U$-\emph{stable}, that is for any $\eps >0$ there exists $\delta >0$ such that if $d(\Phi_{0},\MM)< \delta$, for some $(\Phi_{0},\tilde \Omega_{0})$, then $d(\Phi(t,x),\MM)< \eps$ for all $t\in \R$, where $(\Phi(t,x),\tilde \Omega(t,x))$ is defined as above and $d$ is a distance on the space of matter fields.
\end{itemize}
\end{definition}

In our approach the existence of conservation laws is fundamental to obtain solitons. Indeed given the set of functions with fixed charge
\[
\Sigma_{\sigma} = \set{ (\psi,\Gamma) \in X\, : \, \Ch(\psi,\Gamma) = \sigma}
\]
we obtain solitons first proving that the energy $\E$ has minimum on $\Sigma_{\sigma}$,  then showing that the set $\MM_{\sigma}$ of minimizers is made of solitary waves and that $\MM_{\sigma}$ is a finite dimensional manifold which is $U$-invariant and $U$-stable. For an approach to stability of solitary waves in Hamiltonian PDEs see \cite{gss}. The main difference with our approach is that we give sufficient conditions for stability which depend only on the energy functional, whereas to check the sufficient conditions given in \cite{gss} one needs to have more information about the solution.

\section{Global gauge theory: the nonlinear Klein-Gordon equation} \label{sec:nlkg}

In this section we review the results for the case of \emph{global gauge actions}, namely for the case of a group $H=\set{h_{\lambda}}_{\lambda\in \R}$ acting on $\C^{N}\times \lie$, where the elements $h_{\lambda}$ do not depend on the variables $(t,x)$. In particular we consider the simplest dynamical system which is generated by a Lagrangian density which is invariant for the action of the Poincar\'e group.

Let $\Gamma\equiv 0$ and the matter field $\psi(t,x) \in H^{1}(\R \times \R^{3},\C) = X$. We consider the Lagrangian density
\begin{equation}\label{lag-nlkg}
\LL(\psi,\partial_{t}\psi) = \frac 12 |\partial_{t}\psi|^{2} - \frac 12 |\nabla \psi|^{2} - W(|\psi|)
\end{equation}
for a $C^{2}$ function $W:\R^{+}\to \R$. The Euler-Lagrange equation of $\LL$ is the \emph{nonlinear Klein-Gordon equation}
\begin{equation}\label{nlkg}
\partial_{t}^{2}\psi - \triangle \psi + W'(|\psi|)\, \frac{\psi}{|\psi|} = 0
\tag{NLKG}
\end{equation}
The Lagrangian density \eqref{lag-nlkg} is invariant for the action of the Poincar\'e group on $(t,x)$, which implies the existence of ten conservation laws: energy, momentum, angular momentum and ergocenter velocity. In particular energy takes the form
\begin{equation}\label{en-nlkg}
\E(\psi,\partial_{t}\psi) = \int_{\R^{3}}\, \Big( \frac 12 |\partial_{t}\psi|^{2} + \frac 12 |\nabla \psi|^{2} + W(|\psi|) \Big)\, dx
\end{equation}
and the angular momentum is given by
\[
\vec{L}(\psi,\partial_{t}\psi) = \Re \int_{\R^{3}}\, \Big( \vec{x} \times \nabla \psi \Big)\, \overline{\partial_{t}\psi}\, dx
\]
Moreover, since the Lagrangian $\LL$ only depends on the modulus of $\psi$ and $\partial_{t}\psi$, it is invariant also for the action of the one-dimensional global gauge group $U(1) \cong S^{1} = \set{e^{i\lambda}}_{\lambda\in \R}$ which is given by
\begin{equation}\label{ga-nlkg}
S^{1}\times X \ni (e^{i\lambda}, \psi) \mapsto e^{i\lambda}\psi(t,x) \in X
\end{equation}
By Noether's theorem we obtain one more conservation law, which we call \emph{hylomorphic charge} $\Ch$, and which is given by
\[
\Ch(\psi,\psi_{t}) = \Im \int_{\R^{3}}\, \overline{\psi}\, \partial_{t}\psi \, dx
\]
For equation \eqref{nlkg}, the easiest way to produce a solitary wave solution is to look for solutions of the form
\begin{equation}\label{ansatz-nlkg}
\psi(t,x) = u(x)\, e^{-i\omega t}
\end{equation}
for $u(x):\R^{3}\to \R^{+}$ in $H^{1}(\R^{3})$ and $\omega \in \R$. Notice that these functions are an orbit of $u(x)$ for the action of the gauge group $S^{1}$ with $\lambda = -\omega t$. A function of the form \eqref{ansatz-nlkg} is a solution of \eqref{nlkg} if it satisfies
\begin{equation}\label{nlkg-s}
- \triangle u -\omega^{2} u +W'(u) =0 
\end{equation}
It is immediate to verify that functions of the form \eqref{ansatz-nlkg} satisfy Definition \ref{solw} with $\vec{q}_{_{\psi}}(t) = \vec{q}_{_{\psi}}(0)$ for all $t\in \R$. We can introduce the space of solitary waves of form \eqref{ansatz-nlkg}
\[
X_{S}:= \set{(u,\omega) \in H^{1}(\R^{3},\R^{+}) \times \R}
\]
which is embedded into $X$ by
\[
X_{S} \ni (u,\omega) \mapsto u(x)e^{-i\omega t} \in X
\]
Moreover we consider the energy and charge functionals on $X_{S}$. We get
\[
E(u,\omega) := \E|_{X_{S}} = \int_{\R^{3}}\, \Big( \frac 12 |\nabla u|^{2} + W(u) + \frac 12 \omega^{2} u^{2} \Big)\, dx
\]
\[
C(u,\omega) := \Ch|_{X_{S}} = - \int_{\R^{3}}\, \omega u^{2} \, dx
\]
Moreover notice that $\vec{L}|_{X_{S}}\equiv 0$, hence $X_{S}$ does not contain vortices.

We now sketch the steps to prove the existence of solitons for \eqref{nlkg}. First of all we need to show that there are solitary waves, namely couples $(u,\omega) \in X_{S}$ solutions of \eqref{nlkg-s}. The first result of existence of solutions for equations like \eqref{nlkg-s} in a general form dates back to the classical paper by Berestycki and Lions \cite{bl}, see also \cite{cgm}. Here we use the following simple remark
\begin{proposition}[\cite{nlkg-stab}] \label{critvinc}
A couple $(u,\omega) \in X_{S}$ is a solution of equation \eqref{nlkg-s} if and only if $(u,\omega)$ is a critical point of the energy $E(u,\omega)$ constrained to the manifold
\[
\Sigma^{S}_{\sigma}:= \set{(u,\omega) \in X_{S}\, : \, C(u,\omega) = \sigma}
\]
\end{proposition}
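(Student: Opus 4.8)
The plan is to recognize this as a Lagrange multiplier statement and to compute the constrained Euler--Lagrange system explicitly, showing that it collapses to the single equation \eqref{nlkg-s}. First I would record that, under suitable growth assumptions on $W$ (so that the integrals defining $E$ and $C$ are finite and differentiable), both $E$ and $C$ are $C^{1}$ functionals on $X_{S}$, with variations
\[
\partial_{u}E(u,\omega)[\phi] = \int_{\R^{3}} \big( -\triangle u + W'(u) + \omega^{2} u \big)\phi \, dx, \qquad \partial_{\omega}E(u,\omega) = \int_{\R^{3}} \omega u^{2}\, dx,
\]
\[
\partial_{u}C(u,\omega)[\phi] = -2\omega \int_{\R^{3}} u\phi\, dx, \qquad \partial_{\omega}C(u,\omega) = -\int_{\R^{3}} u^{2}\, dx.
\]

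The next step is to observe that the constraint is regular away from $u\equiv 0$: since $\partial_{\omega}C = -\int_{\R^{3}} u^{2}\,dx \neq 0$ whenever $u\not\equiv 0$, the differential $dC$ is nonzero on $\Sigma^{S}_{\sigma}$ for $\sigma \neq 0$, so $\Sigma^{S}_{\sigma}$ is a $C^{1}$ manifold and the Lagrange multiplier theorem applies. Hence $(u,\omega)$ is a constrained critical point if and only if there exists $\mu \in \R$ with $dE = \mu\, dC$, that is
\[
-\triangle u + W'(u) + \omega^{2} u = -2\mu\omega\, u \quad \text{and} \quad \omega \int_{\R^{3}} u^{2}\,dx = -\mu \int_{\R^{3}} u^{2}\,dx.
\]

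The key computation is then to extract the multiplier from the $\omega$-component: dividing by $\int_{\R^{3}} u^{2}\,dx \neq 0$ forces $\mu = -\omega$. Substituting $\mu = -\omega$ into the first (pointwise) equation turns its right-hand side into $2\omega^{2}u$, so that
\[
-\triangle u + W'(u) + \omega^{2} u - 2\omega^{2} u = -\triangle u - \omega^{2} u + W'(u) = 0,
\]
which is precisely \eqref{nlkg-s}. For the converse direction I would start from a solution of \eqref{nlkg-s}, set $\mu = -\omega$, and read the two displayed identities backwards, checking that both the $u$- and $\omega$-variations of $E - \mu C$ vanish; this is immediate once the $\omega^{2}u$ terms are seen to recombine correctly.

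I expect the only genuinely delicate point to be the functional-analytic setup rather than the algebra: one must ensure that $E$ and $C$ are actually $C^{1}$ on $X_{S}$, which constrains the admissible growth of $W$ through the Sobolev embeddings, and that the constraint is nondegenerate, so that the Lagrange multiplier theorem is legitimately applicable and the critical point condition is equivalent to $dE = \mu\, dC$. Once this is in place, the identification $\mu = -\omega$ and the resulting cancellation of the $\omega^{2}u$ terms are routine, and the equivalence follows in both directions.
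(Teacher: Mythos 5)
Your Lagrange-multiplier argument is correct and is essentially the same proof as in \cite{nlkg-stab} (the paper itself only cites the result): the $\omega$-variation forces $\mu=-\omega$, and substituting into the $u$-variation cancels the $\omega^{2}u$ terms to give exactly \eqref{nlkg-s}, with the converse obtained by reversing the computation, which needs no constraint regularity since $dE=\mu\, dC$ always implies $dE$ vanishes on $T\Sigma^{S}_{\sigma}\subseteq \ker dC$. The only points worth making explicit are that nondegeneracy $dC\neq 0$ fails only at $u\equiv 0$ (relevant just for $\sigma=0$, where $u\equiv 0$ trivially solves \eqref{nlkg-s} anyway), and that since $X_{S}$ uses $\R^{+}$-valued $u$, one should interpret criticality after evenly extending $E$ and $C$ to real-valued $u$, as is standard in \cite{nlkg-stab}.
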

We are then reduced to prove the existence of critical points of $E$ constrained to $\Sigma^{S}_{\sigma}$ for some $\sigma\in \R$. The easiest way to prove the existence of such critical point is to show that $E$, which is a differentiable functional, has a point of minimum on $\Sigma^{S}_{\sigma}$. It turns out that points of minima are relevant also for the second part of the existence of a soliton for \eqref{nlkg}, namely the proof that the found solitary wave is orbitally stable.

Let $(u_{0}, \omega_{0})\in X_{S}$ be a minimizer of $E$ on $\Sigma^{S}_{\sigma}$, where $\sigma=C(u_{0}, \omega_{0})$. Then, since the energy $E$ is invariant under the action of the Poincar\'e group and of the gauge group $S^{1}$, it follows that we actually have a finite dimensional manifold of minimizers for $E$, and henceforth for $\E$, given by
\[
\MM(u_{0},\omega_{0}) = \set{ \psi(t,x) = u_{0}(x+a) e^{i(-\omega_{0}t +\theta)}\, :\, a\in \R^{3},\, \theta \in \R}
\]
Notice that for all $\psi \in \MM(u_{0},\omega_{0})$ we have $\Ch(\psi)= C(u_{0}, \omega_{0}) = \sigma$.

We say that $(u_{0}, \omega_{0})$ is an \emph{isolated} point of minimum for $E$ if for any other minimizer $(u_{1},\omega_{1}) \in \Sigma^{S}_{\sigma}$, with $u_{1}(x)e^{-i\omega_{1}t}\not\in \MM(u_{0},\omega_{0})$, it holds
\[
\MM(u_{0},\omega_{0}) \cap \MM(u_{1},\omega_{1}) = \emptyset\, .
\]

\begin{theorem}[\cite{nlkg-stab}] \label{orbstab-min}
If $(u_{0}, \omega_{0})$ is an isolated point of local minimum for $E$ constrained to $\Sigma^{S}_{\sigma}$, then $\MM(u_{0},\omega_{0})$ is a stable manifold for the flow associated to the nonlinear Klein-Gordon equation. In particular $\psi(t,x) = u_{0}(x) e^{-i\omega_{0}t}$ is a soliton solution to \eqref{nlkg}.
\end{theorem}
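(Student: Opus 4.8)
The plan is to verify directly the two defining properties of a soliton in Definition \ref{ds} for the candidate manifold $\MM(u_0,\omega_0)$, using the energy $\E$ and the charge $\Ch$ as Lyapunov functionals, which is legitimate since both are conserved along the flow $U$ of \eqref{nlkg}. First I would record that $\MM(u_0,\omega_0)$ is a finite dimensional manifold, parametrized by $(a,\theta)\in\R^{3}\times S^{1}$, and that it is $U$-invariant. Invariance is the easy direction: by Proposition \ref{critvinc} the function $u_0(x)e^{-i\omega_0 t}$ solves \eqref{nlkg}, and since the flow commutes with space translations and with the gauge action \eqref{ga-nlkg}, the time evolution of any $u_0(x+a)e^{i(-\omega_0 t+\theta)}$ merely advances the phase, $\theta\mapsto\theta-\omega_0 s$, so it remains inside $\MM(u_0,\omega_0)$.

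The core of the argument is $U$-stability, which I would reduce to a coercivity estimate near the minimum. Writing $e_\sigma:=E(u_0,\omega_0)$ for the local minimal energy, the claim is: for every $\eps>0$ there is $\delta>0$ such that any state $\Phi$ in a fixed neighborhood of $\MM(u_0,\omega_0)$ with $|\E(\Phi)-e_\sigma|<\delta$ and $|\Ch(\Phi)-\sigma|<\delta$ satisfies $d(\Phi,\MM(u_0,\omega_0))<\eps$. Granting this, stability follows at once: given $\eps$, choose the associated $\delta$; by continuity of $\E$ and $\Ch$ a state with $d(\Phi_0,\MM(u_0,\omega_0))$ sufficiently small has energy and charge within $\delta$ of $e_\sigma$ and $\sigma$; conservation of $\E$ and $\Ch$ propagates these bounds to all $t$, and the coercivity estimate then gives $d(\Phi(t),\MM(u_0,\omega_0))<\eps$ for all $t$.

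The main obstacle is proving this coercivity estimate, which I would argue by contradiction: a violating sequence $\Phi_n$ is exactly a minimizing sequence for $\E$ constrained to charge $\sigma$ (staying in the prescribed neighborhood) that fails to approach $\MM(u_0,\omega_0)$. I would first reduce the full constrained problem to the standing-wave class $X_S$, observing that for a fixed spatial profile the charge functional is optimally saturated by taking $\partial_t\psi$ parallel to $i\psi$; this forces the minimizing form $u(x)e^{-i\omega t}$ and identifies the optimal frequency. I would then apply a concentration-compactness analysis to the reduced sequence: \emph{vanishing} is excluded because the charge is bounded away from zero, and \emph{dichotomy} is excluded by the strict subadditivity of the map $\sigma\mapsto e_\sigma$, so the sequence is relatively compact modulo space translations and phase. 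The hypothesis that $(u_0,\omega_0)$ is an \emph{isolated} minimum ensures that the resulting limit lies in the single manifold $\MM(u_0,\omega_0)$, and not in a distinct component of minimizers, contradicting $d(\Phi_n,\MM(u_0,\omega_0))\ge\eps_0$. Because the minimum is only local, I expect the delicate point to be keeping the minimizing sequence inside the neighborhood where $e_\sigma$ is attained while still running the subadditivity argument, which is global in nature.

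Finally, since $u_0(x)e^{-i\omega_0 t}\in\MM(u_0,\omega_0)$ is a solitary wave (Definition \ref{solw} has already been checked for the ansatz \eqref{ansatz-nlkg}), the $U$-invariance and $U$-stability of $\MM(u_0,\omega_0)$ established above show that it is orbitally stable, and hence a soliton in the sense of Definition \ref{ds}.
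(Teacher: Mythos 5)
Your overall skeleton — conservation of $\E$ and $\Ch$ as Lyapunov functionals, invariance of $\MM(u_{0},\omega_{0})$ by uniqueness of the flow plus symmetry, stability reduced to a coercivity estimate attacked by contradiction through a first-exit-time argument, and the reduction to $X_{S}$ by optimizing $\partial_{t}\psi$ at fixed charge — is the right one and matches the strategy of the cited source \cite{nlkg-stab} (the present paper does not reprove the theorem, it only quotes it). But there is a genuine gap exactly at the step you yourself flag and then leave unresolved: your compactness argument is a \emph{global}-minimization tool applied to a theorem whose hypothesis is only an isolated \emph{local} minimum. The exclusion of dichotomy via strict subadditivity of $\sigma\mapsto e_{\sigma}$ presupposes that the violating sequence is a minimizing sequence for $\inf_{\Sigma^{S}_{\sigma}}E$, i.e.\ that the local minimum level coincides with the global infimum. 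Theorem \ref{beh-carica} shows this is false precisely in the regime where the local-minimum hypothesis has content: for $\sigma\in(\sigma_{b},\sigma_{g}]$ part (v) produces local minimizers while part (ii) says $\inf_{\Sigma^{S}_{\sigma}}E$ is \emph{not attained}, so the local minimum value sits strictly above the infimum, your violating sequence is not a minimizing sequence, and subadditivity of the global infimum function gives no contradiction. Moreover strict subadditivity is nowhere among the hypotheses (W0)--(W3) and itself requires proof even in the global regime.

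Two further points. First, your exclusion of vanishing is incorrect as stated: charge bounded away from zero does not prevent vanishing. On the constraint one has $E(u,\omega)\ge J(u)+\tfrac12\bigl(m^{2}\|u\|_{2}^{2}+\sigma^{2}/\|u\|_{2}^{2}\bigr)\ge J(u)+m|\sigma|$, and along a vanishing sequence $\int R(u_{n})\to 0$ by (W3), so the energy level is forced down to $m|\sigma|$ only; vanishing is excluded by the hylomorphy condition $e_{\sigma}<m|\sigma|$ (the inequality \eqref{hylom-ab}-type condition $\Lambda<m$), not by $\sigma\neq 0$. Second, the hypothesis you never actually use is isolatedness, which is the crux: the cited proof runs the compactness argument \emph{locally}, in a fixed neighborhood of the orbit $\MM(u_{0},\omega_{0})$, where closeness to translates of the single profile $u_{0}$ replaces the global concentration-compactness alternatives, and the isolated-minimum assumption guarantees that the only possible limit set in that neighborhood is $\MM(u_{0},\omega_{0})$ itself. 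Writing "I expect the delicate point to be keeping the minimizing sequence inside the neighborhood while running the subadditivity argument, which is global in nature" is an accurate diagnosis, but resolving that tension is the actual content of the theorem; as written, your proof establishes stability only for global minimizers under an additional unproved subadditivity hypothesis, which is a strictly weaker statement.
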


Hence, putting together Proposition \ref{critvinc} and Theorem \ref{orbstab-min} we need to show the existence of an isolated point of local minimum for $E$ constrained to $\Sigma^{S}_{\sigma}$ for some $\sigma$. It follows from \cite{nlkg-stab} and \cite{bon10} that it is possible to study the existence of such point of minimum depending on the value of $\sigma$. 

We now introduce the assumptions on the nonlinear term $W:\R^{+}\to \R$. We assume that $W$ is of class $C^{2}$ and of the form 
\[
W(s) = \frac 12 m^{2} s^{2} + R(s)
\]
such that 
\begin{itemize}
\item[(W0)] $m>0$ and $R(0)=R'(0)=R''(0)=0$;
\item[(W1)] $R(s) \ge - \frac 12 m^{2} s^{2}$ for all $s\in \R^{+}$;
\item[(W2)] there exists $s_{0}>0$ such that $R(s_{0})<0$;
\item[(W3)] there exist positive constants $c_{1},c_{2}$ such that
\[
|R''(s)|\le c_{1}s^{p-2} + c_{2}s^{q-2}
\]
for all $s\in \R^{+}$ and some $2<p,q< 6$.
\end{itemize}
We briefly comment on these assumptions. (W0) simply implies $W''(0)=m^{2}\not= 0$, which can be interpreted as a non-vanishing condition for the ``mass'' of the matter field $\psi$. (W1) implies that $W(s)\ge 0$, so that the energy \eqref{en-nlkg} is non-negative. (W2) and (W3) are standard assumptions in the variational approach to elliptic equations. In particular (W2) is fundamental for the existence of solitary waves, as was already observed in \cite{bl}. Finally (W3) says that $W$ is sub-critical with respect to the Sobolev embedding. This assumption can be weakened as discussed for example in \cite{nlkg-stab}.

Let introduce the notation
\[
X_{S}^{-} := \set{ u\in H^{1}(\R^{3},\R^{+})\, :\, J(u):= \int_{\R^{3}}\, \Big( \frac 12 |\nabla u|^{2} + R(u) \Big)\, dx <0}
\]
and 
\[
\sigma_{g} := \inf_{u\in X_{S}^{-}} \Big( m\|u\|_{L^{2}}^{2} - \|u\|_{L^{2}} \sqrt{2\, |J(u)|} \Big) \ge 0
\]
Putting together the results from \cite{nlkg-stab} and \cite{bon10} we state the following
\begin{theorem}[\cite{nlkg-stab},\cite{bon10}] \label{beh-carica}
Under assumptions (W0)-(W3) on the nonlinear term $W$, we have
\begin{enumerate}[(i)]
\item if $|\sigma|> \sigma_{g}$ then $E(u,\omega)$ admits a point of global minimum on $\Sigma_{\sigma}^{S}$;
\item if $|\sigma|\le \sigma_{g}$ then $\inf_{\Sigma_{\sigma}^{S}} E(u,\omega)$ is not attained;
\item if there exist $\alpha>0$ and $\eps\in (0,\frac 43)$ such that $R(s)<0$ for $s\in (0,\alpha)$ and $\limsup_{s\to 0^{+}}\, \frac{|R(s)|}{s^{2+\eps}} >0$, then $\sigma_{g}=0$;
\item if $\sigma_{g}=0$, then there exists $\alpha>0$ such that $R(s)<0$ for $s\in (0,\alpha)$ and $\limsup_{s\to 0^{+}}\, \frac{|R(s)|}{s^{2+\frac 43}} >0$;
\item if $\sigma_{g}>0$ there exists $\sigma_{b} < \sigma_{g}$ such that if $\sigma \in (\sigma_{b},\sigma_{g}]$ then $E(u,\omega)$ admits a point of local minimum on $\Sigma_{\sigma}^{S}$;
\item if $\sigma_{g}>0$ and there exists $s_{1}>0$ such that $R(s_{1}) = -\frac 12 m^{2} s_{1}^{2}$, then $\sigma_{b}=0$.
\end{enumerate}
\end{theorem}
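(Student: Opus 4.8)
The plan is to eliminate $\omega$ using the charge constraint and reduce everything to a minimization in the single field $u$. On $\Sigma_\sigma^S$ the relation $C(u,\omega)=-\omega\|u\|_{L^2}^2=\sigma$ forces $\omega=-\sigma/\|u\|_{L^2}^2$ (so $u\neq0$ unless $\sigma=0$), and substituting into $E$ gives, writing $\rho=\|u\|_{L^2}$,
\[
\widetilde E(u):=E\big|_{\Sigma_\sigma^S}=J(u)+\tfrac12 m^2\rho^2+\frac{\sigma^2}{2\rho^2}.
\]
The algebraic heart of the theorem is the identity
\[
\widetilde E(u)-m|\sigma|=J(u)+\tfrac12\Big(m\rho-\tfrac{|\sigma|}{\rho}\Big)^2,
\]
which exhibits $m|\sigma|$ as the natural threshold. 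Assumption (W1) gives $\int_{\R^3}W(u)\ge0$, i.e. $m^2\rho^2+2J(u)\ge0$, hence $\sqrt{-2J(u)}\le m\rho$; so on $X_S^-$ the quantity $a(u):=m\rho^2-\rho\sqrt{2|J(u)|}$ is nonnegative and $\sigma_g=\inf_{X_S^-}a\ge0$, consistent with the definition. Moreover (W1) yields $\widetilde E(u)\ge\tfrac12\|\nabla u\|_{L^2}^2+\sigma^2/(2\rho^2)\ge0$, so $\widetilde E$ is bounded below and gradient--coercive.

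From the identity, $\widetilde E(u)<m|\sigma|$ is possible exactly when $u\in X_S^-$ and $a(u)<|\sigma|<b(u)$ with $b(u)=m\rho^2+\rho\sqrt{2|J(u)|}$. A dilation argument $u\mapsto u(\cdot/\theta)$ shows these windows sweep out precisely $(\sigma_g,\infty)$, giving the dichotomy: if $|\sigma|>\sigma_g$ then $\inf\widetilde E<m|\sigma|$, while if $|\sigma|\le\sigma_g$ then $\widetilde E\ge m|\sigma|$ everywhere and $\inf\widetilde E=m|\sigma|$ (approached by spreading test functions). For (i) the strict inequality $\inf\widetilde E<m|\sigma|$ is the hylomorphy/binding condition: running Lions' concentration--compactness on a minimizing sequence, vanishing is excluded because a spread-out limit has $\int R\to0$, hence energy $\ge m|\sigma|$ by the arithmetic--geometric inequality, and dichotomy is excluded by the strict subadditivity of $\sigma\mapsto\inf E$ coming from this inequality; compactness up to translation then produces a global minimizer, which is a soliton by Proposition \ref{critvinc} and Theorem \ref{orbstab-min}. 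For (ii), when $|\sigma|<\sigma_g$ no $u$ even realizes equality (the window condition fails), and at $|\sigma|=\sigma_g$ a minimizer would be a nontrivial $H^1$ solution of \eqref{nlkg-s} at $\omega^2=\sigma^2/\rho^4$; in the extremal case $\omega^2=m^2$ this reads $-\triangle u_0+R'(u_0)=0$, which the Pohozaev identity forces to be trivial, and a Derrick--Pohozaev scaling argument handles the remaining case, so the infimum is not attained.

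For (iii)/(iv), from the two-sided bound $\tfrac1{2m}F(u)\le a(u)\le\tfrac1m F(u)$ with $F(u)=\int_{\R^3}(|\nabla u|^2+2W(u))$ one obtains the clean equivalence $\sigma_g=0\iff\inf_{X_S^-}F=0$, and both statements become statements about $F$. For (iii) I would use test functions $u=s\,\phi(\cdot/\lambda)$ with a fixed bump $\phi$ and $\lambda\sim s^{-\eps/2}$ (constant large enough that $J(u)<0$), evaluated along $s_k\to0^+$ on which $|R(s_k)|\gtrsim s_k^{2+\eps}$ (from the limsup hypothesis); a direct computation shows the dominant term of $F$ is of order $s^{2-3\eps/2}$, which tends to $0$ exactly when $\eps<\tfrac43$. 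Statement (iv) is the converse lower bound, and here the exponent $2+\tfrac43=\tfrac{10}{3}$ is no accident: it is the Gagliardo--Nirenberg-critical power in $\R^3$, $\|u\|_{L^{10/3}}^{10/3}\le C\|u\|_{L^2}^{4/3}\|\nabla u\|_{L^2}^2$. If $|R(s)|=o(s^{10/3})$ near $0$ (with the subcritical control (W3) for large $s$), this inequality forces $\int_{\R^3}|R(u)|$ to be dominated by $\|\nabla u\|_{L^2}^2$ on $X_S^-$, whence $F\ge c>0$ and $\sigma_g>0$; the requirement that $R<0$ on a full interval $(0,\alpha)$ is forced the same way, since driving $F$ to $0$ needs $\int R(u)<0$ concentrated where $u$ is small.

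For the local minima (v)/(vi) I would pass to the one-dimensional reduction $\mu(\rho):=\inf\{J(u):\|u\|_{L^2}=\rho\}$ (finite by (W1), with the variational properties supplied by (W3) and Gagliardo--Nirenberg), so that $\inf\widetilde E=\inf_{\rho>0}g_\sigma(\rho)$ with $g_\sigma(\rho)=\mu(\rho)+\tfrac12 m^2\rho^2+\sigma^2/(2\rho^2)$. When $\sigma_g>0$ and $|\sigma|\le\sigma_g$ the global infimum $m|\sigma|$ is reached only in the dispersive limit, but $g_\sigma$ develops an interior local well for $\sigma$ just below $\sigma_g$; defining $\sigma_b$ as the infimum of charges for which this interior minimum survives gives $\sigma_b<\sigma_g$, and a localized concentration--compactness argument (minimizing over a neighbourhood in which the well is isolated) promotes the one-dimensional local minimizer to an $H^1$ one, proving (v). For (vi), the zero $W(s_1)=0$ lets me use plateau test functions equal to $s_1$ on a large ball $B_\lambda$ (smoothed on a unit-width shell), for which $F=O(\lambda^2)$ while $\rho^2\sim s_1^2\lambda^3$; this depresses $\mu(\rho)$ enough that the interior well persists for every $\sigma\in(0,\sigma_g]$, so $\sigma_b=0$. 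The main obstacles I anticipate are the compactness steps (excluding dichotomy and vanishing in (i), and keeping the minimization genuinely interior in (v)) and the sharp lower bound (iv): it lives exactly at the Gagliardo--Nirenberg-critical exponent $\tfrac{10}{3}$, which is precisely where the sufficient condition $\eps<\tfrac43$ and the necessary condition $\eps\le\tfrac43$ fail to meet.
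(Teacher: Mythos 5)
Your skeleton is exactly the one used in \cite{nlkg-stab} and \cite{bon10} (and mirrored in Proposition \ref{prop-bon-sigma} of this paper): eliminate $\omega$ via the constraint, write $E|_{\Sigma_\sigma^S}=J(u)+\tfrac12 m^2\rho^2+\sigma^2/(2\rho^2)$, use the identity $E|_{\Sigma_\sigma^S}-m|\sigma|=J(u)+\tfrac12(m\rho-|\sigma|/\rho)^2$ to get the window $a(u)<|\sigma|<b(u)$, sweep the windows by dilations and a continuity/connectedness argument to obtain $(\sigma_g,\infty)$, and run concentration--compactness under the strict inequality; your two-sided bound $\tfrac1{2m}F\le a\le\tfrac1m F$ (which follows from $a=\rho F/(m\rho+\sqrt{2|J|})$ and $\sqrt{2|J|}\le m\rho$) and the scaling $\lambda\sim s^{-\eps/2}$ computation for (iii) are correct, and the Gagliardo--Nirenberg exponent $10/3$ is indeed the mechanism behind (iii)/(iv). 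However, there are genuine gaps at the two delicate points. First, in (ii) the endpoint $|\sigma|=\sigma_g$: your Pohozaev argument only disposes of the case $\omega^2=m^2$, and the claim that ``a Derrick--Pohozaev scaling argument handles the remaining case'' does not hold up. For a nontrivial solution of \eqref{nlkg-s} on the constraint, Pohozaev gives $E|_{\Sigma_\sigma^S}-m|\sigma|=\tfrac13\|\nabla u\|_2^2-|\omega|(m-|\omega|)\rho^2$, which is strictly positive when $|\omega|\ge m$ but has no sign in the regime $|\omega|<m$ --- and that is precisely the regime at the left window endpoint, where $|\omega|=|\sigma|/\rho^2=m-\sqrt{2|J|}/\rho<m$. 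Checking the first-order dilation condition at a putative minimizer with $|\sigma|=a(u_*)=\sigma_g$ yields the identity $\|\nabla u_*\|_2^2=3\sqrt{2|J(u_*)|}\,\big(m\rho-\sqrt{2|J(u_*)|}\big)\rho$, which is perfectly consistent (and the second-order condition is automatic since the level is a global minimum), so scaling and Pohozaev alone cannot exclude attainment there; a different idea is needed, and this is the only nontrivial content of (ii).

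Second, your plan for (v) points at the wrong well. With $\mu(\rho):=\inf\{J(u):\|u\|_2=\rho\}$ one always has $\mu\le 0$ (spreading sequences $u_n=n^{-3/2}v(\cdot/n)$ give $J\to0$ at fixed $\rho$), and for $|\sigma|\le\sigma_g$ evaluating $g_\sigma$ at $\rho_0=\sqrt{|\sigma|/m}$ together with $g_\sigma\ge m|\sigma|$ forces $\mu(\rho_0)=0$; hence $\mu\equiv0$ on $(0,\sqrt{\sigma_g/m}\,]$ and the interior minimum of $g_\sigma$ at $\rho_0$, with value exactly $m|\sigma|$, is precisely the dispersive, non-attained level of (ii) --- not a bound state. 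The local minimizers asserted in (v) live at larger $\rho$, at energy strictly above $m|\sigma|$, and their existence requires constructing an inner and an \emph{outer} barrier for the sublevel geometry (note also that the theorem includes the endpoint $\sigma=\sigma_g$ in $(\sigma_b,\sigma_g]$, which ``for $\sigma$ just below $\sigma_g$'' does not cover); the same missing outer barrier affects your (vi), where the plateau functions only give the upper bound $\mu(\rho)\le-\tfrac12 m^2\rho^2+C\rho^{4/3}$. Finally, in (iv) the step ``Gagliardo--Nirenberg forces $\int|R(u)|$ to be dominated by $\|\nabla u\|_2^2$'' is false as stated, since the bound is $\eta C\|u\|_2^{4/3}\|\nabla u\|_2^2$ and $\|u\|_2$ is a priori unbounded on $X_S^-$; this is fixable --- if $F(u_n)\to0$ then (W0) gives $W(s)\ge\tfrac14 m^2s^2$ for small $s$, so $\int W(u_n)\to0$ kills the small-height mass while Sobolev kills $\int_{u_n\ge\delta}u_n^2\le\delta^{-4}C\|\nabla u_n\|_2^6$, whence $\|u_n\|_2\to0$ --- but you must say it, and the necessity of $R<0$ on a full interval $(0,\alpha)$ in (iv) is only gestured at (``forced the same way''), whereas ruling out sign-changing $R$ near $0$ requires a co-area-type cost estimate for avoiding the heights where $R\ge0$, which is a real argument in \cite{bon10}, not a remark.
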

It follows from Theorem \ref{beh-carica} that we have information about the existence of a soliton of charge $\sigma$ according to the behaviour of the nonlinear term $W(s)$. See \cite{garrisi} for a result on a system of Klein-Gordon equations using this approach. 

We now consider the existence of vortices for the nonlinear Klein-Gordon equation. We refer to \cite{hylo-vort} for more details (see also \cite{badrol}). As stated above, functions of the form \eqref{ansatz-nlkg} have vanishing angular momentum. Hence we have to change the ansatz. For $x\in \R^{3}$ let us write $x=(y,z) \in \R^{2}\times \R$, and consider functions of the form  
\begin{equation}\label{ansatz-nlkg-vort}
\psi(t,x) = u(x)\, e^{i( \ell \theta(y)-\omega t)}
\end{equation}
for $u(x):\R^{3}\to \R^{+}$ in $H^{1}(\R^{3})$, $\omega \in \R$, $\ell \in \Z$ and
\begin{equation}\label{angolo}
\theta(y) := \Im \, \log(y_{1}+iy_{2}) \in \R/2\pi\Z
\end{equation}
is the angular variable in the $(y_{1},y_{2})$-plane. Letting $r:=\sqrt{y_{1}^{2}+y_{2}^{2}}$, by definition $\theta$ satisfies
\[
\triangle \theta =0\, , \quad \nabla \theta = \Big( -\frac{y_{2}}{r^{2}},\, \frac{y_{1}}{r^{2}},\, 0\Big)\, , \quad |\nabla \theta| = \frac 1r
\]
It follows that a function $\psi$ of the form \eqref{ansatz-nlkg-vort} is a solution of \eqref{nlkg} if the triple $(u,\omega,\ell)$ is a solution of
\begin{equation}\label{nlkg-vort}
-\triangle u + \Big( \frac{\ell^{2}}{r^{2}} - \omega^{2}\Big) u + W'(u)=0
\end{equation}
Computing the energy, charge and angular momentum on functions of the form \eqref{ansatz-nlkg-vort} we find
\[
E(u,\omega,\ell) := \int_{\R^{3}}\, \Big( \frac 12 |\nabla u|^{2} + W(u) + \frac 12 \Big(\frac{\ell^{2}}{r^{2}} + \omega^{2}\Big) u^{2} \Big)\, dx
\]
\[
C(u,\omega,\ell) :=  - \int_{\R^{3}}\, \omega u^{2} \, dx
\]
\[
\vec{L}(u,\omega,\ell) :=  \Big( 0,\, 0\, , - \int_{\R^{3}}\, \ell \omega u^{2} \, dx\Big) = \Big( 0,\, 0\, , \ell C(u,\omega,\ell) \Big)
\]
Hence if we find a solution to \eqref{nlkg-vort} with $\ell\not=0$ and non-vanishing charge, then we have a vortex solution to \eqref{nlkg}. This is accomplished as for solitary waves by first noticing that the analogous of Proposition \ref{critvinc} holds. Namely,
\begin{proposition}[\cite{hylo-vort}] \label{critvinc-v}
Let $\ell \in \Z\setminus \set{0}$ be fixed. The triple $(u,\omega,\ell)$ is a solution of equation \eqref{nlkg-vort} if and only if $(u,\omega)$ is a critical point of the energy $E(u,\omega,\ell)$ constrained to the manifold $\Sigma^{S}_{\sigma}$.
\end{proposition}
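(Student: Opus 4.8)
The plan is to treat this as a constrained-critical-point problem and run exactly the Lagrange multiplier argument behind Proposition \ref{critvinc}, the only new feature being the extra potential term $\frac{\ell^2}{r^2}u^2$ in the energy. First I would compute the two partial Gâteaux differentials of $E(\cdot,\cdot,\ell)$ and $C(\cdot,\cdot,\ell)$ on $X_S$. Varying $u$ in a direction $v$ and integrating by parts gives
\[
\partial_u E(u,\omega,\ell)[v] = \int_{\R^3} \Big(-\triangle u + W'(u) + \big(\tfrac{\ell^2}{r^2}+\omega^2\big)u\Big)v\,dx, \qquad \partial_\omega E(u,\omega,\ell) = \int_{\R^3}\omega u^2\,dx,
\]
while for the charge
\[
\partial_u C(u,\omega,\ell)[v] = -2\omega\int_{\R^3}u v\,dx, \qquad \partial_\omega C(u,\omega,\ell) = -\int_{\R^3}u^2\,dx.
\]

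Second, since $\partial_\omega C = -\|u\|_{L^2}^2 \neq 0$ whenever $u \not\equiv 0$, the differential $dC$ is surjective onto $\R$ and the level set $\Sigma^S_\sigma$ is a genuine $C^1$ manifold near any nonzero $u$; this is what licenses the Lagrange multiplier rule. Thus $(u,\omega)$ is a constrained critical point iff there is $\lambda \in \R$ with $dE = \lambda\, dC$, i.e.
\[
-\triangle u + W'(u) + \big(\tfrac{\ell^2}{r^2}+\omega^2\big)u = -2\lambda\omega u \quad\text{and}\quad \int_{\R^3}\omega u^2\,dx = -\lambda\int_{\R^3}u^2\,dx.
\]
Reading off the second (scalar) equation forces $\lambda = -\omega$, and substituting this value into the first equation collapses its right-hand side to $2\omega^2 u$ and yields exactly
\[
-\triangle u + \big(\tfrac{\ell^2}{r^2}-\omega^2\big)u + W'(u) = 0,
\]
which is \eqref{nlkg-vort}. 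The converse is obtained by running the same chain backwards: given a solution of \eqref{nlkg-vort} one sets $\sigma = C(u,\omega,\ell)$ and $\lambda = -\omega$ and checks that both multiplier identities hold, so that $(u,\omega)$ is critical on $\Sigma^S_\sigma$.

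The step I expect to demand the most care is not the algebra but the functional-analytic setting forced by the singular weight $\ell^2/r^2$, which blows up on the axis $r=0$. Unlike the purely solitary-wave case of Proposition \ref{critvinc}, the energy $E(u,\omega,\ell)$ is finite only on the subspace of $H^1(\R^3)$ for which $\int \ell^2 r^{-2} u^2\,dx <\infty$, and one must verify there that $E$ is of class $C^1$ and that the integration by parts producing $\partial_u E$ is legitimate, with no boundary contribution from the axis. Restricting to cylindrically symmetric $u$ vanishing suitably near $r=0$, or invoking the planar Hardy inequality in the $(y_1,y_2)$-variables, makes all the displayed integrals convergent and the differentials well defined; once this is in place the Lagrange multiplier computation goes through verbatim.
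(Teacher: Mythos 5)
Your proof is correct and takes essentially the same route as the cited source \cite{hylo-vort} for this statement (the paper itself states it without proof): the Lagrange multiplier rule on $\Sigma^{S}_{\sigma}$, with the $\omega$-derivative of the constraint forcing $\lambda=-\omega$, whose substitution collapses the Euler equation to \eqref{nlkg-vort} exactly as in the proof of Proposition \ref{critvinc}. Your closing caveat about working in the weighted subspace $\left\{u\in H^{1}:\int_{\R^{3}} \ell^{2}r^{-2}u^{2}\,dx<\infty\right\}$ is indeed the functional setting adopted in \cite{hylo-vort}; note only that the multiplier step (surjectivity of $dC$) requires $u\not\equiv 0$, i.e.\ $\sigma\neq 0$, which is implicit in the statement.
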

Again the easiest way to find constrained critical points for $E(u,\omega,\ell)$ is to look for minimizers on $\Sigma^{S}_{\sigma}$. A weaker version of Theorem \ref{beh-carica} holds:
\begin{theorem}[\cite{hylo-vort}] \label{beh-vort}
Under assumptions (W0)-(W3) and for any fixed $\ell \in \Z\setminus \set{0}$, there exists $\sigma_{0} >0$ such that if $|\sigma|>\sigma_{0}$ the energy $E(u,\omega,\ell)$ admits a point of global minimum on $\Sigma^{S}_{\sigma}$. In particular the nonlinear Klein-Gordon equation admits a vortex solution with finite energy, charge $\sigma$ and angular momentum $\ell \sigma$.
\end{theorem}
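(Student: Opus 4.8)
The plan is to use the direct method of the calculus of variations together with a concentration--compactness argument, after first eliminating the frequency $\omega$. Since $\omega\in\R$ is a scalar and the charge reads $C(u,\omega,\ell)=-\omega\|u\|_{L^{2}}^{2}=\sigma$, for every $u$ with $u\not\equiv 0$ the constraint determines $\omega=-\sigma/\|u\|_{L^{2}}^{2}$ uniquely. Substituting this into $E(u,\omega,\ell)$ reduces the constrained problem on $\Sigma^{S}_{\sigma}$ to the free minimization of
\[
\tilde E_{\sigma}(u) = \int_{\R^{3}}\Big(\frac 12|\nabla u|^{2} + W(u) + \frac{\ell^{2}}{2r^{2}}u^{2}\Big)\,dx + \frac{\sigma^{2}}{2\|u\|_{L^{2}}^{2}}
\]
over cylindrically symmetric $u\in H^{1}(\R^{3},\R^{+})$ with $\int_{\R^{3}}u^{2}/r^{2}\,dx<\infty$, so that a minimizer $u_{0}$ yields a minimizer $(u_{0},\omega_{0})$ of $E$ on $\Sigma^{S}_{\sigma}$ with $\omega_{0}=-\sigma/\|u_{0}\|_{L^{2}}^{2}$. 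Restricting to cylindrically symmetric functions is legitimate by symmetrization and cuts the non-compact symmetry group down to translations along the $z$-axis, the only ones preserving the Hardy weight $\ell^{2}/r^{2}$.

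First I would prove coercivity. By (W1) one has $W\ge 0$, so the gradient and centrifugal terms stay bounded along any minimizing sequence; integrating (W3) twice with $R(0)=R'(0)=0$ gives $|R(s)|\le C(s^{p}+s^{q})$, and Gagliardo--Nirenberg interpolation (using $2<p,q<6$ and the bound on $\|\nabla u\|_{L^{2}}$) controls $\int|R(u)|$ by a subquadratic power of $\|u\|_{L^{2}}$. Hence the mass term $\tfrac12 m^{2}\|u\|_{L^{2}}^{2}$ dominates and forces $\|u_{n}\|_{L^{2}}$ bounded above, while the penalization $\sigma^{2}/(2\|u\|_{L^{2}}^{2})$ keeps it bounded away from $0$. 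This gives boundedness in $H^{1}$ and in the weighted space, so a minimizing sequence converges weakly up to a subsequence.

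The decisive quantity is the free threshold $m|\sigma|$. If mass disperses (vanishing), then $\|u_{n}\|_{L^{p}}\to 0$ for $p\in(2,6)$, so $\int R(u_{n})\to 0$ and, dropping the nonnegative gradient and centrifugal terms, $\liminf\tilde E_{\sigma}(u_{n})\ge\inf_{B>0}\big(\tfrac12 m^{2}B+\sigma^{2}/(2B)\big)=m|\sigma|$. It therefore suffices to prove the strict binding bound $I(\sigma):=\inf\tilde E_{\sigma}<m|\sigma|$, and this is exactly where the large-charge hypothesis enters. Using (W2) I would pick a cylindrically symmetric $\phi$ supported away from the $z$-axis with $\int R(\phi)<0$ and test with the dilation $u_{\lambda}=\phi(\cdot/\lambda)$, choosing $\lambda$ so that $\|u_{\lambda}\|_{L^{2}}^{2}=|\sigma|/m$. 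A scaling computation then gives
\[
I(\sigma)\le\tilde E_{\sigma}(u_{\lambda})= m|\sigma| + \frac{\lambda}{2}\Big(\|\nabla\phi\|_{L^{2}}^{2} + \int_{\R^{3}}\frac{\ell^{2}}{r^{2}}\phi^{2}\,dx\Big) + \frac{|\sigma|}{m}\,\frac{\int_{\R^{3}}R(\phi)\,dx}{\|\phi\|_{L^{2}}^{2}},
\]
where $\lambda\sim|\sigma|^{1/3}$. The last term is negative and linear in $|\sigma|$, while the middle term grows only like $|\sigma|^{1/3}$; hence $\tilde E_{\sigma}(u_{\lambda})<m|\sigma|$ once $|\sigma|>\sigma_{0}$ for a suitable $\sigma_{0}>0$, which rules out vanishing.

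The hard part is excluding the remaining loss of compactness on the unbounded domain: mass may split into separating bumps (dichotomy) or drift to spatial infinity. Drift along the $z$-axis preserves the Hardy weight, so the limit still feels the rotational energy, but drift in the radial direction sends $\ell^{2}/r^{2}\to 0$ and degenerates toward a problem without the centrifugal term, and this is the genuinely delicate channel special to vortices. To close the argument I would establish a strict subadditivity (binding) inequality for $I$, showing that keeping the charge together is strictly cheaper than any splitting $\sigma=\sigma_{1}+\sigma_{2}$; the large charge is precisely what guarantees it, since the favourable term $\int R(u)<0$ scales linearly in the charge whereas the gradient and centrifugal terms scale sublinearly. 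Combined with the sub-threshold bound $I(\sigma)<m|\sigma|$, this excludes vanishing and dichotomy, so a $z$-translated subsequence converges strongly, and the weak limit $u_{0}$ is a minimizer by lower semicontinuity of the convex part and strong $L^{p}$-convergence of the $R$-term. Finally, by Proposition \ref{critvinc-v} the pair $(u_{0},\omega_{0})$ is a critical point of $E$, hence solves \eqref{nlkg-vort}; the associated $\psi$ of the form \eqref{ansatz-nlkg-vort} is a solitary wave of finite energy, charge $\sigma$ and angular momentum $\ell\sigma\neq 0$, i.e. a vortex in the sense of Definition \ref{solw-v}.
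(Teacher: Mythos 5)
Your reduction to the $\omega$-eliminated functional $\tilde E_{\sigma}$, the coercivity estimates, the identification of the vanishing threshold $m|\sigma|$, and the rescaled test-function computation giving $\inf \tilde E_{\sigma} < m|\sigma|$ for $|\sigma|$ large are all correct, and that last step is exactly the mechanism behind the cited result (it is the hylomorphy condition $\Lambda < m$ of \cite{hylo-sol,hylo-vort}). The gaps are in the compactness part, and the first one is not cosmetic: the claim that ``restricting to cylindrically symmetric functions is legitimate by symmetrization'' is unjustifiable. No rearrangement decreases all three terms into the axisymmetric class (Schwarz symmetrization pushes mass toward the axis, where the weight $\ell^{2}/r^{2}$ blows up), and the restriction genuinely changes the problem rather than being a convenience. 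Indeed, translating any fixed compactly supported competitor $u$ to distance $R$ from the $z$-axis makes $\int_{\R^{3}} \ell^{2} u^{2}/r^{2}\, dx \to 0$ as $R\to\infty$, so over the full space $H^{1}(\R^{3},\R^{+})$ one has
\[
\inf \tilde E_{\sigma} = \inf \tilde E_{\sigma}^{\infty}\, ,
\]
where $\tilde E_{\sigma}^{\infty}$ is the $\ell=0$ functional; since the centrifugal term is strictly positive on any admissible $u\not\equiv 0$, a putative minimizer would have to satisfy $\int u^{2}/r^{2}=0$, i.e. $u\equiv 0$, impossible at charge $\sigma\neq 0$. So in the unrestricted space the global minimum is \emph{never} attained, and the theorem must be posed, from the start, in the axisymmetric class $u=u(r,z)$, where a constrained minimizer solves \eqref{nlkg-vort} by Palais' principle of symmetric criticality --- not by any symmetrization.

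The second gap is that the binding inequality is asserted, not proved, and in the form you propose it would fail or misses the point. In the full space the inequality $I(\sigma) < I(\sigma_{1}) + I_{\infty}(\sigma_{2})$ is false, since $I = I_{\infty}$ as just computed; in the axisymmetric class it is the wrong comparison, because an axisymmetric bump escaping radially is a torus of radius $R\to\infty$ whose $L^{2}$ mass scales like $R$ times the cross-sectional mass, so at fixed charge its profile must degenerate: this channel cannot converge to an $\ell=0$ soliton at infinity, it is vanishing-like with energy per unit charge tending to $m$, and is therefore already excluded by the strict bound $I(\sigma)<m|\sigma|$. For the remaining dichotomy/$z$-translation channels your heuristic --- that $\int R(u)<0$ gains linearly in the charge while gradient and centrifugal terms grow sublinearly --- supports only the threshold bound, not strict subadditivity for an arbitrary splitting $\sigma=\sigma_{1}+\sigma_{2}$ with both pieces large, since each piece then enjoys the same linear gain. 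What an actual splitting argument exploits (as in \cite{nlkg-stab,bon10,hylo-vort}) is the structure of the eliminated-$\omega$ term: for widely separated bumps $v,w$ one has $\sigma_{1}^{2}/\|v\|_{2}^{2} + \sigma_{2}^{2}/\|w\|_{2}^{2} \ge (\sigma_{1}+\sigma_{2})^{2}/(\|v\|_{2}^{2}+\|w\|_{2}^{2})$, with equality only at equal frequencies, combined with the sub-threshold inequality. As written, your proof establishes the theorem only modulo these two unproved claims, one of which is unprovable in the form stated.
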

The orbital stability of these vortex solutions is open at this moment. However in \cite{hylo-vort} we give some analytical and numerical results that suggest that these solutions are unstable.

\section{Global vs local gauge theories} \label{sec:gvl}

In the last section we have considered the Euler-Lagrange equations related to a simple Lagrangian density $\LL$ depending only on the matter field. The Lagrangian $\LL$ was invariant under the action of the global gauge group $S^{1}$. Now we examine how a Lagrangian density has to change if we want to consider the action of a local gauge group. For this section we refer to \cite{yang}.

Let us consider the Lagrangian density \eqref{lag-nlkg} with $\psi(t,x) \in \C^{N}$ and a group $G < U(N)$ with Lie algebra $\lie$. Let us consider the gauge action on $\C^{N}$ of $G$-valued functions $g(t,x)\in G$ defined in $\R\times \R^{3}$. So for each $(t,x)\in \R\times \R^{3}$, we consider the action
\[
G \times \C^{N} \ni (g(t,x),\psi(t,x)) \mapsto g(t,x)\Big[ \psi(t,x)\Big]\in \C^{N}\, .
\]
Let us see how $\LL$ changes when evaluated on $g(t,x)[ \psi(t,x)]$. The last term is unchanged, $W(|g[\psi]|) = W(|\psi|)$, since $g(t,x)\in G<U(N)$ for each $(t,x)$. Instead the terms containing the derivatives of $\psi$ become
\[
\Big| \partial_{j}\, \Big(g(t,x)[\psi(t,x)]\Big) \Big| = \Big| \Big(\partial_{j} g(t,x)\Big) \psi(t,x) + g(t,x)\, \partial_{j} \psi(t,x) \Big|
\]
for $j=0,1,2,3$, where\footnote{The signs come from the choice of a metric on $\R\times \R^{3}$ (see \cite{bb}).} $\partial_{0}=-\partial_{t}$ and $\nabla = (\partial_{1},\partial_{2},\partial_{3})$. One way to keep invariance also of the terms with derivatives is to substitute $\{\partial_{j}\}_{j=0,1,2,3}$ with the \emph{covariant derivatives} 
\begin{equation}\label{deriv-cov}
D_{j} := \partial_{j} +q\, \Gamma_{j}(t,x)\, , \qquad j=0,1,2,3
\end{equation}
where $q>0$ is a real parameter, which is the strength of the action of $\Gamma$ on the matter field, and $\Gamma=(\Gamma_{j})$ are the gauge potentials, that is $\lie$-valued functions. The covariant derivatives have been introduced in differential geometry to differentiate functions defined on manifolds along tangent vectors. In this approach the gauge potentials are called \emph{connection}. We refer the reader to \cite{jost}.

Hence let
\begin{equation}\label{lag-matter}
\LL_{0}(\psi,\partial_{t}\psi,\nabla \psi) := \frac 12 |D_{0}\psi|^{2} - \frac 12 \sum_{j=1}^{3} |D_{j} \psi|^{2} - W(|\psi|)
\end{equation}
Denoting $\tilde \Gamma_{j}(t,x) = g(t,x)[\Gamma_{j}(t,x)]$, we have
\[
\Big| \Big(\partial_{j} +q\, \tilde \Gamma_{j}(t,x)\Big) \Big(g(t,x)[\psi(t,x)]\Big) \Big| =
\]
\[
= \Big| \Big(\partial_{j} g(t,x)\Big) \psi(t,x) + g(t,x)\, \partial_{j} \psi(t,x) + q\, \tilde \Gamma_{j}(t,x) g(t,x)[\psi(t,x)]\Big|=
\]
\[
= \Big| g(t,x) \Big[ \partial_{j} \psi(t,x) +\Big( q\, g^{-1}(t,x) \tilde \Gamma_{j}(t,x) g(t,x) + g^{-1}(t,x) \partial_{j} g(t,x) \Big) \psi(t,x) \Big] \Big| =
\]
\[
= \Big| \partial_{j} \psi(t,x) +\Big( q\, g^{-1}(t,x) \tilde \Gamma_{j}(t,x) g(t,x) + g^{-1}(t,x) \partial_{j} g(t,x) \Big) \psi(t,x) \Big|
\]
where in the last equality we have used again that $g(t,x)\in G<U(N)$ for each $(t,x)$. Finally, letting
\[
q\, g^{-1}(t,x) \tilde \Gamma_{j}(t,x) g(t,x) + g^{-1}(t,x) \partial_{j} g(t,x) = q\, \Gamma_{j}(t,x)
\]
it follows that
\[
\Big| \Big(\partial_{j} +q\, \tilde \Gamma_{j}(t,x)\Big) \Big(g(t,x)[\psi(t,x)]\Big) \Big| = \Big| \Big(\partial_{j} +q\, \tilde \Gamma_{j}(t,x)\Big) \psi(t,x) \Big|
\]
Hence the Lagrangian density \eqref{lag-matter} is invariant for the action of a local gauge group $G<U(N)$ if we define the gauge action of $G$ on the gauge potentials $\Gamma$ by
\begin{equation}\label{azione-gauge}
g(t,x)[\Gamma_{j}(t,x)] = g(t,x) \Gamma_{j}(t,x) g^{-1}(t,x) - \frac 1q  \Big( \partial_{j} g(t,x) \Big) g^{-1}(t,x)
\end{equation}

Finally typically one wants to study systems in which the gauge potentials are not an external action on the matter field, but are instead part of the system. In this case one needs to add another term to the Lagrangian density to drive the evolution of $\Gamma$. It turns out that one of the simplest terms which is invariant under the action \eqref{azione-gauge} of the gauge group is given by
\begin{equation}\label{lag-gauge}
\LL_{1}(\Gamma,\partial_{t}\Gamma,\nabla \Gamma) := \frac12 \, \sum_{j=1}^{3}\, \| F_{0j} \|^{2} - \frac14 \, \sum_{k,j=1}^{3}\, \| F_{kj} \|^{2}
\end{equation}
where $F=(F_{kj})$ is the \emph{strength of the gauge field}, or the \emph{curvature} of the connection $\Gamma$ in differential geometry, with
\begin{equation}\label{curvature}
F_{kj} := \partial_{k}\Gamma_{j} - \partial_{j}\Gamma_{k} + q\, [\Gamma_{k},\Gamma_{j}] \in \lie
\end{equation}
and $[\cdot,\cdot]$ is the standard commutator, and finally $\|U\|^{2}:= \text{trace}(U^{*}U)$ is the Hilbert norm on $\lie$.

In the next sections we study local gauge theories with $N=1$ and $N=2$ using the Lagrangian density $\LL = \LL_{0}+\LL_{1}$.

\section{Local gauge theory: the Abelian case} \label{sec:abelian}

We first consider the case $N=1$ and $G=U(1)$, so that $\psi(t,x) \in \C$ and $\Gamma=(\Gamma_{j})$ with $\Gamma_{j}(t,x) \in \lie = \ualg{1} = i \R$. This is called the Abelian case because the gauge group $G$ is Abelian.

The system of equations that we obtain is called \emph{Klein-Gordon-Maxwell} system, since as we show below, it can be interpreted as the system for a charged particle interacting with itself through the nonlinear term $W$ and with an electromagnetic field with potentials $\Gamma$. For this reason we use the notation of $\Gamma$ as a four-vector with components
\[
\Gamma = (-i\varphi, i\A) \quad \mbox{where} \quad \A=( A_{1}, A_{2}, A_{3})
\]
The covariant derivatives \eqref{deriv-cov} then take the form
\[
D_{0} \psi := \left( - \partial_t - i q \varphi \right) \psi\, , \qquad D_{j} \psi :=  \left( \partial_{j} + i q A_{j} \right) \psi, \quad  j=1,2,3
\]
Using this notation we rewrite $\LL_{0}$ in \eqref{lag-matter} as follows:
\begin{equation} \label{elle-0-p-abel}
\LL_{0}(\psi, \partial_{t} \psi, \nabla \psi) = \frac 1 2\, \left| \partial_{t} \psi + i\, q\, \varphi \psi \right|^{2} - \frac 1 2\, \left| \nabla \psi + i\, q\, \A \psi \right|^{2} -W(|\psi|)
\end{equation}
To write $\LL_{1}$ in \eqref{lag-gauge} we first compute the components $F_{kj}$ defined in \eqref{curvature}, which in this case are complex numbers given by
$$
F_{0j} = - i \partial_{t} A_{j} + i \partial_{j} \varphi\, , \qquad j=1,2,3
$$
$$
F_{kj} = i \partial_{k} A_{j} - i \partial_{j} A_{k}\, , \qquad k,j=1,2,3
$$
and $\| F_{kj} \|^{2} = |F_{kj}|^{2}$. It follows that
$$
\sum_{j=1}^{3}\, \| F_{0j}\|^{2} = \sum_{j=1}^{3}\, \left( \partial_{t} A_{j} - \partial_{j} \varphi \right)^{2} = \left| \partial_{t} \A - \nabla \varphi \right|^{2}
$$
$$
\sum_{k,j=1}^{3}\, \|F_{ij}\|^{2} = 2\ \left( |F_{12}|^{2} + |F_{23}|^{2} + |F_{31}|^{2} \right)= 2\ \left| \nabla \times \A \right|^{2}
$$
and
\begin{equation} \label{elle-1-p-abel}
\LL_{1}(\Gamma, \partial_{t}\Gamma,\nabla \Gamma) =  \frac 1 2\,  \left| \partial_{t} \A - \nabla \varphi \right|^{2} - \frac 1 2\, \left| \nabla \times \A \right|^{2}
\end{equation}
In this case the gauge action on the matter field $\psi$ is the same as in the Klein-Gordon equation and is given by \eqref{ga-nlkg}, and the action \eqref{azione-gauge} becomes
\begin{equation}\label{azione-abel}
e^{i\lambda(t,x)}[\Gamma_{j}(t,x)] = \Gamma_{j}(t,x) - \frac iq \partial_{j} \lambda(t,x) 
\end{equation}
Using \eqref{azione-abel} it is easy to verify that $e^{i\lambda(t,x)}[F_{kj}(t,x)] = F_{kj}(t,x)$ for each $k,j=0,1,2,3$. 

To obtain the Klein-Gordon-Maxwell system of equations we make the variations of ${\mathcal S}=\int (\LL_{0}+\LL_{1})$ with respect to $\psi$, $\varphi$ and $\A$, and obtain
\begin{align}
& D^{2}_{0}\, \psi - \sum_{j=1}^{3}\, D^{2}_{j}\, \psi + W'(\psi)=0 \label{prima-kgm} \\
& \nabla \cdot \left( \partial_{t} \A - \nabla \varphi \right) + q\, \Re(i\, \psi\, \partial_t \bar \psi) + q^2\, |\psi|^2 \, \varphi = 0  \label{seconda-kgm} \\
& \partial_{t}  \left( \partial_{t} \A - \nabla \varphi \right)  + \nabla \times (\nabla \times \A) + q\, \Re (i\, \psi \, \nabla \bar \psi) + q^2\, |\psi |^2 \A= 0  \label{terza-kgm}
\end{align}
and we look for solutions 
\[
(\psi,\varphi,\A) \in X= H^{1}(\R\times\R^{3},\C) \times \dot H^{1}(\R\times\R^{3},\R)\times (\dot H^{1}(\R\times\R^{3},\R))^{3}\, .
\]

A useful approach to equations (\ref{prima-kgm})-(\ref{terza-kgm}) is to look for solutions $\psi(t,x) \in \C$ written in polar form, that is
\begin{equation} \label{polar-1}
\psi(t,x) = u(t,x) \, e^{i \, S(t,x)}, \qquad u\in \R^{+}, \ S\in \R/2\pi \Z
\end{equation}
Using notation (\ref{polar-1}), equation (\ref{prima-kgm}) splits in the equations
\begin{align}
& \partial_{t}^{2} u -\triangle u + \left[ |\nabla S + q\A|^{2} - \left( \partial_{t} S +q\varphi \right)^{2} \right] \, u + W'(u) =0 \label{prima-kgm-p1} \\
& \partial_{t} \left[ \left( \partial_{t}S +q\varphi \right) u^{2} \right] - \nabla \cdot \left[ \left( \nabla S + q\A \right) u^{2} \right] =0 \label{prima-kgm-p2}
\end{align}
and (\ref{seconda-kgm}) and (\ref{terza-kgm}) become
\begin{align}
& \nabla \cdot \left( \partial_{t} \A - \nabla \varphi \right) + q\, (\partial_{t} S + q\varphi) \, u^{2}=0 \label{seconda-kgm-p} \\
& \partial_{t}  \left( \partial_{t} \A - \nabla \varphi \right)  + \nabla \times (\nabla \times \A) + q\, (\nabla S + q \A)\, u^{2} =0  \label{terza-kgm-p}
\end{align}
If we make the identifications
\[
{\mathbf E} = - \partial_{t} \A + \nabla \varphi\, , \quad {\mathbf H} = \nabla \times \A
\]
with $\mathbf E$ the electric field and $\mathbf H$ the magnetic field, and
\[
\rho = q \left( \partial_{t}S +q\varphi \right) u^{2}\, , \quad \mathbf{j} = - q\left( \nabla S + q\A \right) u^{2}
\]
with $\rho$ the electric charge density and $\mathbf{j}$ the electric current density, it follows that equation \eqref{prima-kgm-p2} is the continuity equation for the electric charge density, equation \eqref{seconda-kgm-p} is the Gauss equation and equation \eqref{terza-kgm-p} is the Ampere equation. Moreover the Faraday equation and the null-divergence equation for the magnetic field are automatically satisfied. Hence \eqref{prima-kgm-p1}-\eqref{terza-kgm-p} is the called the Klein-Gordon-Maxwell system.

The Lagrangian density $\LL = \LL_{0}+ \LL_{1}$ given by \eqref{elle-0-p-abel} and \eqref{elle-1-p-abel} is invariant for the action of the Poincar\'e group, hence we obtain the ten conservation laws given by energy
\[
\E = \frac 12\ \int_{\R^3} \, \Big[ (\partial_{t} u)^{2} + |\nabla u|^{2} + \frac{\rho^{2} + |{\mathbf j}|^{2}}{q^{2} u^{2}} + 2W(u) + |\partial_{t} \A - \nabla \varphi|^{2} + |\nabla \times \A|^{2}  \Big]\, dx 
\]
momentum $\vec{P}$, angular momentum
\[
\vec{L} = \int_{\R^{3}}\, {\mathbf x} \times \left[ \partial_{t}u \, \nabla u - \frac{\rho\, \nabla S}{q^{2} u^{2}} + \left( \partial_{t} \A + \nabla \varphi \right) \times (\nabla \times \A)   \right]\, dx
\]
and velocity of the ergocenter $\vec{V}$. Finally the gauge action gives one more conservation law, the hylomorphic charge
\begin{equation} \label{charge-kgm-p}
\Ch = \int_{\R^3}\, \rho \, dx = q\, \int\, \left( \partial_{t}S +q\varphi \right) u^{2}\, dx\, .
\end{equation}

The existence of soliton and vortices solutions to equations (\ref{prima-kgm-p1})-(\ref{terza-kgm-p}) has been proved in \cite{bf-vort,bf-stab-g} using the ansatz \eqref{ansatz-nlkg-vort} 
\[
\psi(t,x) = u(x)\, e^{i( \ell \theta(y)-\omega t)}\, .
\]
These solutions have non-vanishing matter angular momentum (see \cite{bf-vort})
\[
\vec{L}_{m} := \int_{\R^{3}}\, \ell u^{2}\, (-\omega + q\varphi)\, (\vec{x} \times \nabla \theta)\, dx
\]
when $\ell \not= 0$. We recall the notation $x=(y,z)\in \R^{2}\times \R$ and $r=\sqrt{y_{1}^{2}+y_{2}^{2}}$. Benci and Fortunato proved that
\begin{theorem}[\cite{bf-vort,bf-stab-g}] \label{main-bf}
Let $W$ satisfy (W0)-(W3) of Section \ref{sec:nlkg}. Then for all $\ell \in \Z$ there exists $q_0 >0$ such that for every $q \in (0,q_0)$ the system (\ref{prima-kgm-p1})-(\ref{terza-kgm-p}) admits a finite energy solution $(u,\omega,\varphi,\A)$ in the sense of distributions with: $u=u(r,z)\not\equiv 0$; $\omega >0$; $\varphi=\varphi(r,z)\not\equiv 0$; $\A = a(r,z)\, \nabla \vartheta$ with $\A \equiv 0$ if and only if $\ell = 0$. Moreover, if $\ell=0$ these solutions are orbitally stable. 
\end{theorem}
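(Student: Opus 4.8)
The plan is to follow the same Coleman/constrained-minimization scheme used for \eqref{nlkg}, treating the coupling constant $q$ as a perturbation parameter so that the gauge fields decouple in the limit $q\to 0^{+}$ and the problem collapses onto the nonlinear Klein-Gordon case already settled by Theorem~\ref{beh-vort}. First I would insert the ansatz \eqref{ansatz-nlkg-vort} together with the symmetric gauge fields $\varphi=\varphi(r,z)$ and $\A=a(r,z)\nabla\theta$ into \eqref{prima-kgm-p1}--\eqref{terza-kgm-p}. Since $\partial_{t}u=0$ and $\partial_{t}\A=0$, the continuity equation \eqref{prima-kgm-p2} holds identically and what survives is the static elliptic system
\begin{align}
& -\triangle u + \Big[ \frac{(\ell + qa)^{2}}{r^{2}} - (\omega - q\varphi)^{2}\Big]\, u + W'(u) = 0, \notag\\
& -\triangle \varphi + q^{2} u^{2}\, \varphi = q\,\omega\, u^{2}, \notag\\
& \nabla\times(\nabla\times \A) + q\,(\ell + qa)\, u^{2}\, \nabla\theta = 0 . \notag
\end{align}
I would then seek solutions as minimizers of the energy $\E$ on the fixed-charge manifold $\Sigma_{\sigma}=\set{\Ch=\sigma}$, working throughout in the space of cylindrically symmetric fields in order to recover the compactness lost to the action of rotations and of translations in the $(y_{1},y_{2})$-plane; finiteness of the energy (the $\ell^{2}/r^{2}$ and magnetic terms) is guaranteed by this ansatz and the vanishing of $u$ on the axis.

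The crucial structural remark is that, in contrast with the action, the energy $\E$ is \emph{convex} in the gauge fields: the relevant pieces $\tfrac12|\nabla\varphi|^{2}+\tfrac12(\omega-q\varphi)^{2}u^{2}$ and $\tfrac12(\ell+qa)^{2}u^{2}/r^{2}+\tfrac12|\nabla\times(a\nabla\theta)|^{2}$ are convex in $\varphi$ and in $a$ respectively, and their Euler--Lagrange equations are precisely the Gauss equation \eqref{seconda-kgm-p} and the Ampere equation \eqref{terza-kgm-p}. Hence, for fixed $u$ and $\omega$, these two linear elliptic problems have unique solutions $\varphi_{u},a_{u}$, and I would first establish a priori bounds uniform in $q$, using the maximum principle to get $0\le q\varphi_{u}\le\omega$ and an analogous control on $a_{u}$. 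Substituting $\varphi_{u},a_{u}$ back yields a reduced energy $\hat E_{q}(u,\omega)$ and reduced charge $\hat C_{q}(u,\omega)$ on the same space $X_{S}$ as in Section~\ref{sec:nlkg}; the bounds show that, as $q\to 0^{+}$, these converge uniformly on energy-bounded sets to the Klein-Gordon functionals $E(u,\omega,\ell)$ and $C(u,\omega,\ell)$.

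The heart of the proof, and the step I expect to be hardest, is the compactness of a minimizing sequence for $\hat E_{q}$ on $\Sigma_{\sigma}$. After the cylindrical reduction the only residual noncompactness comes from translations along the $z$-axis, so I would run a concentration-compactness argument and exclude vanishing and dichotomy by proving the strict sub-additivity of the infimum $I_{q}(\sigma):=\inf_{\Sigma_{\sigma}}\hat E_{q}$. For the Klein-Gordon functional this strict inequality is exactly what underlies Theorem~\ref{beh-vort} and holds once $|\sigma|$ exceeds the threshold $\sigma_{0}$ there; since $I_{q}(\sigma)\to I_{0}(\sigma)$ as $q\to0^{+}$, fixing such a $\sigma$ and choosing $q_{0}$ small enough makes the strict inequality persist for all $q\in(0,q_{0})$, yielding a minimizer $(u,\omega)$. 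Elliptic regularity then upgrades the weak solution, and the qualitative features are read off directly: $\omega>0$ is fixed by the sign of $\sigma$, $u\not\equiv0$, and from the Gauss equation $\varphi_{u}\not\equiv0$; finally $a_{u}\equiv0$ solves the Ampere equation precisely when $\ell=0$, so by uniqueness $\A\equiv0\iff\ell=0$.

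It remains to treat orbital stability when $\ell=0$. In that case $\A\equiv0$, the solution is a global minimizer of $\E$ on $\Sigma_{\sigma}$, and the angular momentum vanishes, so the situation is the exact analogue of the Klein-Gordon setting: I would invoke the abstract principle behind Theorem~\ref{orbstab-min}, namely that an isolated minimizer of the energy on a fixed-charge manifold generates, via the Poincar\'e and gauge symmetries, a finite-dimensional $U$-invariant and $U$-stable manifold $\MM$ of minimizers, whence the solution is a soliton. For $\ell\neq0$ the minimizer still exists but this last argument does not apply, consistently with the expectation that the spinning (vortex) solutions are unstable.
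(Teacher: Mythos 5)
Your overall architecture coincides with the route the paper sketches in Section~\ref{sec:nr-abelian} (following Benci--Fortunato \cite{bf,bf-vort,bf-stab-g}): insert the ansatz; solve the Gauss and Ampere equations uniquely at fixed $(u,\omega)$ by convexity of the energy in the gauge fields --- your maximum-principle bound $0\le q\varphi_u\le\omega$ is exactly \eqref{alt} in the rescaled variable $\phi=\varphi/\omega$; substitute back to obtain reduced energy and charge functionals (Proposition~\ref{critvinc-ab}); minimize at fixed charge for $q$ small; deduce stability for $\ell=0$ from the isolated-minimizer principle behind Theorem~\ref{orbstab-min}; and read off $\omega>0$, $\varphi\not\equiv0$ from the Gauss equation and $\A\equiv0\iff\ell=0$ from uniqueness in the Ampere equation. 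All of that matches.

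The genuine gap is in your existence step. First, strict subadditivity of $I_q(\sigma)$ is not ``exactly what underlies Theorem~\ref{beh-vort}'': the mechanism there, and in the paper's treatment of the Klein-Gordon-Maxwell case, is the hylomorphy condition --- exhibiting a single $\bar u$ with $E_\sigma(\bar u)<m\sigma$, i.e.\ $\Lambda(\bar u,\bar\omega)<m$ in \eqref{hylom-ab}, i.e.\ $J(\bar u)<0$ --- which is what simultaneously excludes vanishing \emph{and} the escape of a small piece of charge in the splitting analysis. Second, your persistence argument (``$I_q(\sigma)\to I_0(\sigma)$ as $q\to0^+$, hence the strict inequality survives for small $q$'') does not close as stated: strict subadditivity is the family of strict inequalities $I_q(\sigma)<I_q(\sigma_1)+I_q(\sigma-\sigma_1)$ over \emph{all} splittings $\sigma_1\in(0,\sigma)$, and this family degenerates as $\sigma_1\to0^+$, since $I_q(\sigma_1)\to 0$ while $I_q(\sigma-\sigma_1)\to I_q(\sigma)$; pointwise convergence of the infimum at the single value $\sigma$ therefore yields no uniform strictness. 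The difficulty is compounded by the nonlocality of the reduced functional in $u$ (the term $\tfrac12 m^2\int q\phi_u u^2\,dx$ hidden in $J$, with $\phi_u$ determined by the Gauss equation), which makes a direct subadditivity proof delicate. The repair is the paper's own device: keep your reduction, but replace subadditivity-persistence by a quantitative verification of the hylomorphy inequality --- take $\bar u$ realizing $E_\sigma(\bar u)<m\sigma$ for the pure Klein-Gordon problem ($|\sigma|>\sigma_0$ of Theorem~\ref{beh-vort}) and note that the Gauss equation gives $q\phi_{\bar u}=O(q^2)$ pointwise for fixed $\bar u$, so the extra gauge term in $J$ vanishes as $q\to0^+$ and the strict inequality $\Lambda<m$ persists for $q<q_0$. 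This is precisely where the smallness of $q$ enters in \cite{bf-vort,bf-stab-g} (compare the fixed-$q$ analogue in Proposition~\ref{nuove-q}, where $J(u_r)<0$ is checked on an explicit test function), rather than through a soft limit of infima.
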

This theorem shows the existence of solitons and vortices for small enough interaction between the matter and the gauge field as quantified by the parameter $q$. In the next subsection we give more details of the proof of Theorem \ref{main-bf} for solitary waves, that is for $\ell=0$, studying the dependence on the charge \eqref{charge-kgm-p}, showing the existence of solitary waves for arbitrarily large $q$ or electric charge.

\subsection{Solitary waves in Abelian gauge theories} \label{sec:nr-abelian}

Here we follow the approach in \cite{bf} (see also \cite{mugnai}). We look for solitary waves solutions to the system (\ref{prima-kgm-p1})-(\ref{terza-kgm-p}) using the ansatz \eqref{ansatz-nlkg}
\[
\psi(t,x)= u(x) e^{-i\omega t}
\]
with $\omega\not= 0$, so that by Theorem \ref{main-bf} we also have $\A\equiv 0$, and we also assume $\varphi= \varphi(x)$ and introduce the notation $\phi(x) = \frac{\varphi(x)}{\omega}$. Hence the Klein-Gordon-Maxwell system reduces to the equations
\begin{align}
& -\triangle u - \omega^{2} (q\phi - 1)^{2} u + W'(u) = 0 \label{uno-sw-ab} \\ 
& -\triangle \phi +q (q\phi-1)u^{2}=0 \label{due-sw-ab}
\end{align}
with equations \eqref{prima-kgm-p2} and \eqref{terza-kgm-p} being identically satisfied. So we consider the space of solitary waves
\[
X_{S}:= \set{ (u,\omega,\phi) \in H^{1}(\R^{3},\R^{+})\times \R \times \dot H^{1}(\R^{3},\R)}
\]
which is embedded into $X$ by
\[
X_{S}\ni (u,\omega,\phi)\, \mapsto\, \Big( u(x) e^{-i\omega t},\, \omega \phi(x), \, \mathbf{0} \Big) \in X
\]
Energy and charge on $X_{S}$ are given by
\[
\tilde E(u,\omega,\phi) := \E|_{X_{S}} = \int_{\R^{3}}\, \Big( \frac 12 |\nabla u|^{2} + \frac 12 \omega^{2} (q\phi -1)^{2} u^{2} + W(u) +\frac 12 \omega^{2} |\nabla \phi|^{2} \Big)\, dx  
\]
\[
\tilde C(u,\omega,\phi) := \Ch|_{X_{S}} = q\, \int_{\R^{3}}\, \omega (q\phi -1) u^{2}\, dx  
\]
In their approach to system \eqref{uno-sw-ab}-\eqref{due-sw-ab}, Benci and Fortunato first prove that for any $u\in H^{1}$ there exists a unique solution $\phi_{u} \in \dot H^{1}$ to \eqref{due-sw-ab}, with the map
\[
H^{1}\ni u \mapsto \phi_{u} \in \dot H^{1}
\]
being of class $C^{1}$, and
\begin{equation}\label{alt}
0\le \phi_{u}(x) \le \frac 1q
\end{equation}
Hence they introduce on $H^{1}$ the $C^{1}$ functional
\begin{equation}\label{kappa}
K(u) := \int_{\R^{3}}\, \Big( |\nabla \phi_{u}|^{2} + (q\phi_{u} -1)^{2} u^{2}\Big)\, dx = \int_{\R^{3}}\, (1-q\phi_{u})\, u^{2}\, dx
\end{equation}
which satisfies (cfr. Lemma 8 in \cite{bf} and Lemma 2.1 in \cite{mugnai})
\[
K'(u) = 2u\, (1-q\phi_{u})^{2}
\]
Hence if we consider the reduced energy and charge
\[
E(u,\omega) := \tilde E(u,\omega,\phi_{u}) = \int_{\R^{3}}\, \Big( \frac 12 |\nabla u|^{2} + W(u) \Big)\, dx + \frac 12 \omega^{2} K(u)
\]
\[
C(u,\omega) := \tilde C(u,\omega,\phi_{u}) = - q\omega K(u)
\]
we get
\begin{proposition}[\cite{bf}] \label{critvinc-ab}
A triple $(u,\omega,\phi)\in X_{S}$ is a solution of the system \eqref{uno-sw-ab}-\eqref{due-sw-ab} if and only if $\phi=\phi_{u}$ solves equation \eqref{due-sw-ab} and the couple $(u,\omega)$ is a critical point of the energy $E(u,\omega)$ constrained to the manifold
\[
\Sigma^{S}_{\sigma}:= \set{(u,\omega) \in H^{1}\times \R\, : \, C(u,\omega) = q\sigma}\, .
\]
\end{proposition}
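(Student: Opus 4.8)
The plan is to eliminate $\phi$ and then to read \eqref{uno-sw-ab} as the Euler--Lagrange equation produced by the Lagrange multiplier rule. Since, by the result of Benci and Fortunato quoted above, \eqref{due-sw-ab} has for each $u\in H^{1}$ a unique solution $\phi_{u}\in \dot H^{1}$, and since \eqref{due-sw-ab} involves neither $\omega$ nor the first equation, requiring $\phi$ to solve \eqref{due-sw-ab} is exactly requiring $\phi=\phi_{u}$. This condition sits on both sides of the claimed equivalence, so the real content is that, with $\phi=\phi_{u}$, equation \eqref{uno-sw-ab} is equivalent to the constrained criticality of $(u,\omega)$. I would fix $\sigma$ by $q\sigma=C(u,\omega)$, so that $(u,\omega)\in \Sigma^{S}_{\sigma}$ holds automatically, and prove both implications simultaneously by identifying the multiplier explicitly.

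First I would record the differentials, the crucial input being the formula $K'(u)=2u\,(1-q\phi_{u})^{2}$ quoted in the excerpt; this is exactly where the $C^{1}$ dependence $u\mapsto \phi_{u}$ together with the stationarity of $\phi_{u}$ enters, producing the \emph{envelope} cancellation that makes the reduced functionals manageable. For a direction $v\in H^{1}$ one finds
\[
\partial_{u}E(u,\omega)[v]=\int_{\R^{3}}\big(\nabla u\cdot\nabla v+W'(u)\,v\big)\,dx+\omega^{2}\int_{\R^{3}} u\,(1-q\phi_{u})^{2}\,v\,dx,\qquad \partial_{\omega}E(u,\omega)=\omega\,K(u),
\]
while for the charge
\[
\partial_{u}C(u,\omega)[v]=-2q\omega\int_{\R^{3}} u\,(1-q\phi_{u})^{2}\,v\,dx,\qquad \partial_{\omega}C(u,\omega)=-q\,K(u).
\]

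Now $(u,\omega)$ is critical for $E$ on $\Sigma^{S}_{\sigma}$ iff there is $\mu\in\R$ with $E'(u,\omega)=\mu\,C'(u,\omega)$. The $\omega$-component reads $\omega K(u)=-\mu q K(u)$, which for $K(u)\neq 0$ forces $\mu=-\omega/q$; feeding this value into the $u$-component and integrating by parts turns $E'=\mu C'$ into
\[
-\triangle u+W'(u)+\omega^{2}(1-q\phi_{u})^{2}u=2\omega^{2}(1-q\phi_{u})^{2}u,
\]
which is exactly \eqref{uno-sw-ab} since $(1-q\phi_{u})^{2}=(q\phi_{u}-1)^{2}$. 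Reading this chain forwards (set $\mu=-\omega/q$, use \eqref{uno-sw-ab} to get $E'=\mu C'$, so that $E'$ annihilates $\ker C'=T\Sigma^{S}_{\sigma}$) gives the ``only if'' direction; reading it backwards gives the ``if'' direction.

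The step needing care is the nondegeneracy that licenses both the multiplier rule and the division by $K(u)$. By \eqref{alt} we have $0\le 1-q\phi_{u}\le 1$, hence $K(u)=\int_{\R^{3}}(1-q\phi_{u})u^{2}\,dx\ge 0$; and on $\Sigma^{S}_{\sigma}$ the constraint reads $-\omega K(u)=\sigma$, so whenever $\sigma\neq 0$ one automatically gets $\omega\neq 0$ and $K(u)\neq 0$. The latter makes $\partial_{\omega}C=-qK(u)\neq 0$, so $C'(u,\omega)\neq 0$ and $\Sigma^{S}_{\sigma}$ is a genuine $C^{1}$ manifold near the point, legitimizing the Lagrange rule; it is also precisely what allows solving the $\omega$-equation for $\mu$. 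The remaining degenerate case $K(u)=0$ corresponds to $\sigma=0$, where the middle term of \eqref{uno-sw-ab} vanishes and one is left with the purely static equation $-\triangle u+W'(u)=0$; this should be set aside, as it carries no charge and is not among the solitary waves sought.
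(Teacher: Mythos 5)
Your proof is correct and is essentially the intended argument: the paper states Proposition \ref{critvinc-ab} without proof, citing \cite{bf}, and the proof there is precisely this Lagrange-multiplier computation on the reduced functionals $E(u,\omega)$ and $C(u,\omega)$, with the quoted formula $K'(u)=2u\,(1-q\phi_{u})^{2}$ supplying the envelope cancellation and $\partial_{\omega}C=-qK(u)\neq 0$ legitimizing the multiplier rule. One minor imprecision in your closing remark: $K(u)=0$ actually forces $u\equiv 0$ (since $\phi_{u}\in\dot H^{1}$ cannot be the constant $1/q$), so $K(u)=0$ implies $\sigma=0$ but not conversely --- the case $\omega=0$, $K(u)>0$, $\sigma=0$ is not degenerate and your computation still goes through with $\mu=0$; this is harmless here since the paper assumes $\sigma>0$ throughout.
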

Here $q$ is a fixed parameter and, without loss of generality, we assume $\sigma>0$ and $\omega<0$, since $K(u)\ge 0$ by \eqref{alt}. Using the notation
\[
W(s) = \frac 12 m^{2} s^{2} + R(s)
\]
with $R(s)$ satisfying (W0)-(W3) of Section \ref{sec:nlkg}, we define
\begin{equation}\label{j-ab}
J(u) := \int_{\R^{3}}\, \Big( \frac 12 |\nabla u|^{2} + R(u) + \frac 12 m^{2} q\phi_{u}\, u^{2} \Big)\, dx 
\end{equation}
and we write the energy $E(u,\omega)$ on $\Sigma^{S}_{\sigma}$ as
\[
E_{\sigma}(u):= E|_{\Sigma^{S}_{\sigma}} = J(u) + \frac 12 \Big( m^{2} K(u) + \frac{\sigma^{2}}{K(u)} \Big)
\]
By Proposition \ref{critvinc-ab} we are reduced as in Section \ref{sec:nlkg} to look for critical points of $E(u,\omega)$ constrained to $\Sigma^{S}_{\sigma}$. In \cite{bf} and \cite{bf-stab-g}, Benci and Fortunato show that the analogous of Theorem \ref{orbstab-min} holds. Hence the existence of a soliton solution to system \eqref{uno-sw-ab}-\eqref{due-sw-ab} is implied by the existence of a point of local minimum for $E(u,\omega)$ constrained to $\Sigma^{S}_{\sigma}$. Benci and Fortunato use the so-called \emph{hylomorphy ratio} $\Lambda(u)$ given by
\begin{equation}\label{hylom-ab}
\Lambda(u,\omega) := \frac{E(u,\omega)}{-\omega K(u)} = \frac{E_{\sigma}(u)}{\sigma}\, ,
\end{equation}
introduced in \cite{hylo-sol}, and show that if there exists $(\bar u,\bar \omega) \in \Sigma^{S}_{\sigma}$ such that $\Lambda(\bar u,\bar \omega) < m$, then, assuming (W0)-(W3), $E(u,\omega)$ admits a global minimizer on the manifold $\Sigma^{S}_{\sigma}$, hence there exists a soliton solution to system \eqref{uno-sw-ab}-\eqref{due-sw-ab} with electric charge $\Ch=q\sigma$. By \eqref{hylom-ab} this is equivalent to show that, using $\sigma$ as a parameter, there exists $\bar u\in H^{1}$ such that $E_{\sigma}(\bar u)< m\sigma$ (cfr. Lemma 19 in \cite{bf-stab-g}).

We now argue as in \cite{bon10} to give more information on the values of $\sigma$ and $q$ for which we have a soliton solution to system \eqref{uno-sw-ab}-\eqref{due-sw-ab} with electric charge $\Ch = q\sigma$. First we prove the analogous of Theorem \ref{beh-carica}-(i). Let
\[
J^{-} := \set{u\in H^{1}(\R^{3},\R^{+})\, :\, J(u)<0}
\]
with $J(u)$ defined in \eqref{j-ab}. Then
\begin{proposition}\label{prop-bon-sigma}
Under assumptions (W0)-(W3), $E(u,\omega)$ admits a point of global minimum on $\Sigma^{S}_{\sigma}$ for all $\sigma\in (\sigma_{g}, \sigma_{_{G}})$, where
\begin{align*}
& \sigma_{g}:= \inf_{u\in J^{-}} \Big( mK(u) - \sqrt{2K(u) |J(u)|}\Big) \\ 
& \sigma_{_{G}}:= \sup_{u\in J^{-}} \Big( mK(u) + \sqrt{2K(u) |J(u)|}\Big)
\end{align*}
and $\sigma_{g}=\sigma_{_{G}}=+\infty$ if $J^{-}=\emptyset$.
\end{proposition}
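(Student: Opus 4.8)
The plan is to read off, from the Benci--Fortunato criterion recalled just above the statement (a soliton exists on $\Sigma^{S}_{\sigma}$ as soon as some competitor $\bar u$ satisfies $E_{\sigma}(\bar u) < m\sigma$), exactly which charges $\sigma$ admit such a competitor, and to identify that set with the interval $(\sigma_g,\sigma_{_{G}})$. Recall that we work with $\sigma>0$. First I would substitute the explicit expression $E_{\sigma}(u) = J(u) + \tfrac12\big(m^{2}K(u) + \sigma^{2}/K(u)\big)$ into the inequality $E_{\sigma}(u) < m\sigma$. For any competitor with $K(u)>0$ (necessary for $E_{\sigma}(u)$ to be finite), multiplying by the positive quantity $2K(u)$ turns this into the quadratic inequality in $\sigma$
\[
\sigma^{2} - 2mK(u)\,\sigma + \big(m^{2}K(u)^{2} + 2K(u)J(u)\big) < 0 .
\]

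The next step is elementary algebra on this quadratic. Its discriminant equals $-8\,K(u)J(u)$, which is positive exactly when $J(u)<0$, i.e. when $u\in J^{-}$; indeed, by the arithmetic--geometric inequality $\tfrac12\big(m^{2}K + \sigma^{2}/K\big)\ge m\sigma$, so $E_{\sigma}(u)\ge J(u)+m\sigma\ge m\sigma$ whenever $J(u)\ge 0$, and no such $u$ can ever beat the threshold. For $u\in J^{-}$ the inequality holds precisely for $\sigma$ in the nonempty open interval
\[
I_{u} := \Big( mK(u) - \sqrt{2K(u)|J(u)|},\ mK(u) + \sqrt{2K(u)|J(u)|}\Big),
\]
whose left and right endpoints are exactly the quantities whose infimum and supremum define $\sigma_g$ and $\sigma_{_{G}}$. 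Hence the set of admissible charges is $\bigcup_{u\in J^{-}} I_{u}$, and the proposition reduces to the inclusion $(\sigma_g,\sigma_{_{G}}) \subseteq \bigcup_{u\in J^{-}} I_{u}$ (the case $J^{-}=\emptyset$ being vacuous, with $\sigma_g=\sigma_{_{G}}=+\infty$).

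To prove this inclusion I would fix $\sigma\in(\sigma_g,\sigma_{_{G}})$ and argue by connectedness. From $\sigma>\sigma_g$ choose $u_{1}\in J^{-}$ with the left endpoint of $I_{u_1}$ below $\sigma$, and from $\sigma<\sigma_{_{G}}$ choose $u_{2}\in J^{-}$ with the right endpoint of $I_{u_2}$ above $\sigma$. If $\sigma$ already lies in $I_{u_1}$ or in $I_{u_2}$ we are done; otherwise $\sigma$ sits beyond the right end of $I_{u_1}$ and before the left end of $I_{u_2}$, which forces $mK(u_{1}) < \sigma < mK(u_{2})$, since each interval lies on one side of its own center $mK$. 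Now I would join $u_{1}$ to $u_{2}$ by a continuous path $\gamma$ in $J^{-}$ and apply the intermediate value theorem to the continuous scalar function $s\mapsto mK(\gamma(s))$: it is below $\sigma$ at one end and above $\sigma$ at the other, so $mK(\gamma(s^{*}))=\sigma$ for some $s^{*}$. But this value is precisely the center of the nonempty interval $I_{\gamma(s^{*})}$, whence $\sigma\in I_{\gamma(s^{*})}$, which is exactly what is needed.

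The main obstacle is the input to this last step: that $J^{-}$ is path-connected and that $u\mapsto K(u),J(u)$ are continuous. Continuity is the easy half, following from the stated $C^{1}$ dependence $u\mapsto\phi_{u}$ together with (W3) and the Sobolev embeddings, which make $K(u)=\int(1-q\phi_{u})u^{2}$ and $J(u)$ continuous on $H^{1}$. Path-connectedness of $J^{-}$ is the delicate point, since the naive amplitude homotopy $s\,u$ leaves $J^{-}$ near $s=0$ (there the gradient term dominates because $R''(0)=0$). I would instead build paths from the translation invariance of $J$ — valid since $\phi_{u(\cdot-a)}=\phi_{u}(\cdot-a)$ — to move supports arbitrarily far apart, combine this with a spatial dilation to drive $J$ as negative as desired, and then interpolate between two far-separated, deeply negative profiles so that the unavoidable positive contribution of intermediate configurations stays dominated. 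Throughout, the nonlocal coupling enters only through $\phi_{u}$, uniformly controlled by $0\le q\phi_{u}\le 1$; managing this coupling in the connectivity argument, as is done in \cite{bon10} for the purely Klein--Gordon case, is where the real work lies.
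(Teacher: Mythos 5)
Your algebraic core is exactly the paper's proof (which follows Proposition 2.4 of \cite{bon10}): the arithmetic--geometric bound $\frac 12\big(\frac{m^{2}}{\sigma}K+\frac{\sigma}{K}\big)\ge m$ forcing $J(u)<0$, the quadratic inequality in $\sigma$ with discriminant $-8K(u)J(u)$, and the identification of the admissible charges with $\bigcup_{u\in J^{-}}\big(\sigma_{g}(u),\sigma_{_{G}}(u)\big)$ all coincide with the paper's steps. You have also correctly isolated the one nontrivial point, the inclusion $(\sigma_{g},\sigma_{_{G}})\subseteq\bigcup_{u\in J^{-}}I_{u}$: the paper dispatches it in a single sentence (``continuity of the functions $\sigma_{g,_{G}}(u)$ and the non-vanishing of $K$''), and your intermediate-value argument on the centers $mK(u)$ along a path in $J^{-}$ is a faithful expansion of what that sentence must mean. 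Up to this point your proposal is the paper's proof made more explicit.

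The gap is in your sketched completion of that step. ``A spatial dilation to drive $J$ as negative as desired'' is the right move for the pure Klein--Gordon functional, where $J(u(\cdot/\lambda))\le\lambda^{3}J(u)$ for $\lambda\ge1$ -- this is how \cite{bon10} sweeps charges, and why Theorem \ref{beh-carica}(i) has no upper endpoint -- but it fails for the functional \eqref{j-ab} of the gauged problem. Writing $J(u)=\int\big(\frac 12|\nabla u|^{2}+W(u)\big)\,dx-\frac 12 m^{2}K(u)$, the bound \eqref{alt}, $0\le q\phi_{u}\le 1$, caps the negative term: testing $K(u_{\lambda})=\min_{\phi}K(u_{\lambda},\phi)$ with $\phi\equiv 1/q$ on the support of $u_{\lambda}$ and decaying outside gives $K(u_{\lambda})=O(\lambda/q^{2})$ (a capacity term), while $\int W(u_{\lambda})\,dx=\lambda^{3}\int W(u)\,dx\ge0$ grows like the volume; hence $J(u_{\lambda})\to+\infty$ for fixed $q$ whenever $\int W(u)\,dx>0$, and large dilates \emph{leave} $J^{-}$. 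This screening effect is not incidental: it is exactly why the present proposition has a finite upper endpoint $\sigma_{_{G}}$, why \eqref{gagl-nir} in Proposition \ref{nuove-q} ties the bump size $r$ to $1/q$, and why Theorem \ref{mio-ab} must shrink $q\sim 1/r$ to reach large charges. Your connectivity scheme can be repaired by deepening $J$ with far-separated translated copies instead of dilates (the nonlocal cross terms are positive, since $\phi_{u}\ge0$, but decay like a Coulomb tail, so $J$ of $N$ distant bumps is close to $NJ(u)$), combined with the small-barrier interpolation you describe; note also that full path-connectedness of $J^{-}$ is more than needed -- a continuous family in $J^{-}$ joining each relevant pair, or a sweeping family issuing from each $u\in J^{-}$, already feeds the intermediate value theorem. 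As written, however, the dilation step would fail, and it is the only part of your argument that goes beyond what the paper itself asserts.
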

\begin{proof}
We follow the proof of Proposition 2.4 in \cite{bon10}. By the results by Benci and Fortunato, we need only to show that for all $\sigma \in (\sigma_{g}, \sigma_{_{G}})$ we have
\begin{equation}\label{scopo}
\inf_{u\in H^{1}}\, \frac{E_{\sigma}(u)}{\sigma} = \inf_{u\in H^{1}}\, \Big[ \frac 1\sigma J(u) +  \frac 12 \Big( \frac{m^{2}}{\sigma} K(u) + \frac{\sigma}{K(u)} \Big) \Big] < m
\end{equation}
We recall that $K(u)\ge 0$ by \eqref{alt}, and $K(u)\not= 0$ for all $u\in H^{1}$ because $\varphi_{u}\in \dot H^{1}$, hence
\[
\inf_{u\in H^{1}} \frac 12 \Big( \frac{m^{2}}{\sigma} K(u) + \frac{\sigma}{K(u)} \Big) \ge m
\]
It follows that $E_{\sigma}<m\sigma$ implies $J(u)<0$, hence we have to consider only functions in $J^{-}$. Moreover, writing $\Lambda$ in \eqref{hylom-ab} as a function of $u$ and $\sigma$, from basic algebra it follows that
\[
\Lambda(u,\sigma)= \frac{E_{\sigma}(u)}{\sigma} \ge m \quad \Leftrightarrow \quad \sigma \in \R^{+}\setminus \Big(\sigma_{g}(u), \sigma_{_{G}}(u)\Big)
\]
with
\begin{align}
& \sigma_{g}(u):= mK(u) - \sqrt{2K(u) |J(u)|} \label{s-picc} \\ 
& \sigma_{_{G}}(u):= mK(u) + \sqrt{2K(u) |J(u)|} \label{s-gran}
\end{align}
Whence $E_{\sigma}\ge m\sigma$ for all $u\in J^{-}$ if and only if $\sigma \in \R^{+}\setminus \Big(\sigma_{g}, \sigma_{_{G}}\Big)$, where we have used continuity of the functions $\sigma_{g, _{G}}(u)$ and the non-vanishing of $K$ to show that
\[
\bigcup_{u\in H^{1}}\, \Big(\sigma_{g}(u), \sigma_{_{G}}(u)\Big) = \Big(\sigma_{g}, \sigma_{_{G}}\Big)
\]
Inequality \eqref{scopo} for $\sigma\in (\sigma_{g}, \sigma_{_{G}})$ is proved.
\end{proof}
Proposition \ref{prop-bon-sigma} implies that soliton solutions exist for all electric charges $\Ch \in (q\sigma_{g}, q\sigma_{_{G}})$ if $J^{-}\not= \emptyset$, where we remark that $\sigma_{g,_{G}}$ depend on $q$ since $K(u)$ and $J(u)$ do. Benci and Fortunato have shown that if $q$ is small enough then $J^{-} \not= \emptyset$, giving no information on the values of $\sigma_{g,_{G}}$. They conjecture that $q\sigma_{_{G}}<\infty$.

We now study the possible values of $q$ for which $J^{-} \not= \emptyset$. Let us denote by $c_{3}$ the best constant in the Gagliardo-Nirenberg inequality in $\R^{3}$, that is
\begin{equation}\label{uso}
c_{3} \Big( \int_{\R^{3}}\, |\phi|^{6}\, dx \Big)^{\frac 13} \le \int_{\R^{3}}\, |\nabla \phi|^{2}\, dx
\end{equation}
for all $\phi \in \dot H^{1}$.
\begin{proposition}\label{nuove-q}
For any fixed $q>0$ we assume that $W$ satisfies (W0)-(W3) and that there exist $s_{1},r>0$ and $h\in (0,1)$ such that
\begin{itemize}
\item[(W4)] $W$ is non-decreasing in $(0,s_{1})$ and 
\begin{equation}\label{ale}
-\frac 12 m^{2} s_{1}^{2} \le R(s_{1}) < \frac 12 s_{1}^{2} \Big[ (1+m^{2}h^{2}) \frac{r^{3}}{(r+1)^{3}}\, - (1+m^{2}) \Big]
\end{equation}
\end{itemize}
and
\begin{equation}\label{gagl-nir}
\Big( \frac{c_{3}}{48^{\frac 13} \pi^{\frac 23}} \Big)^{\frac 12}\, \frac{1-h}{qh} > s_{1}r\, .
\end{equation}
Then there exists $u\in H^{1}$ such that $J(u) <0$.
\end{proposition}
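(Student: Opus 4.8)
The plan is to exhibit a single explicit test function $u\in H^{1}(\R^{3},\R^{+})$ for which $J(u)<0$; by the definition of $J^{-}$ this already proves the claim. Guided by the geometric constants in \eqref{ale} and \eqref{gagl-nir}, I would take $u$ radial, equal to the constant $s_{1}$ on the ball $B_{r}$, decreasing linearly from $s_{1}$ to $0$ on the annulus $B_{r+1}\setminus B_{r}$, and vanishing outside $B_{r+1}$. With this choice the ratio of the core volume to the total volume is exactly $\frac{r^{3}}{(r+1)^{3}}$, which is precisely the factor appearing in \eqref{ale}, and $|\nabla u|$ has the constant value $s_{1}$ on the annulus because the transition layer has thickness $1$.

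I would then estimate the three contributions to $J(u)=\int_{\R^{3}}\big(\frac12|\nabla u|^{2}+R(u)+\frac12 m^{2}q\phi_{u}u^{2}\big)\,dx$ separately. The gradient term is exact: since $|\nabla u|=s_{1}$ on the annulus and $0$ elsewhere, $\int\frac12|\nabla u|^{2}=\frac12 s_{1}^{2}\,|B_{r+1}\setminus B_{r}|$. For the potential term I would split over core and annulus: on $B_{r}$ one has $R(u)=R(s_{1})$, strongly negative by \eqref{ale} and providing the gain, while on the annulus $0\le u\le s_{1}$, so the monotonicity of $W$ in $(0,s_{1})$ from (W4) gives $W(u)\le W(s_{1})$, that is $R(u)\le W(s_{1})-\frac12 m^{2}u^{2}$; this controls the annulus contribution through $W(s_{1})=\frac12 m^{2}s_{1}^{2}+R(s_{1})$ and $\int_{\text{ann}}u^{2}$.

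The delicate step, and the one I expect to be the main obstacle, is the nonnegative gauge term $\frac12 m^{2}\int q\phi_{u}u^{2}$, which works against us; the crude bound $0\le q\phi_{u}\le 1$ from \eqref{alt} only returns $J\ge\int\big(\frac12|\nabla u|^{2}+W(u)\big)\ge 0$ and is useless. Instead I would rewrite \eqref{due-sw-ab} as $-\triangle(q\phi_{u})+q^{2}u^{2}(q\phi_{u})=q^{2}u^{2}$ and apply the comparison principle to obtain the pointwise bound $q\phi_{u}\le q^{2}(-\triangle)^{-1}(u^{2})$. Hence $\int q\phi_{u}u^{2}\le q^{2}\int u^{2}\,(-\triangle)^{-1}(u^{2})$, and this Coulomb-type integral is controlled by the (dual) Sobolev inequality \eqref{uso}, which gives $\int q\phi_{u}u^{2}\le c_{3}^{-1}q^{2}\|u\|_{L^{12/5}}^{4}$. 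Bounding $\|u\|_{L^{12/5}}$ by $s_{1}$ times a power of $|B_{r+1}|$ turns the right-hand side into an explicit expression in $q,s_{1},r$ and the numerical constant $48^{1/3}\pi^{2/3}$, and \eqref{gagl-nir} is exactly the requirement that makes this gauge term no larger than $\frac12 m^{2}s_{1}^{2}\,|B_{r}|\,(1-h^{2})$.

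Finally I would collect the three estimates. Taking the annulus thickness equal to $1$ makes the gradient term absorb the $\frac12 s_{1}^{2}$-budget produced on the annulus, the monotonicity bound handles the $R$-contribution there, and \eqref{gagl-nir} absorbs the gauge term into the $m^{2}h^{2}\frac{r^{3}}{(r+1)^{3}}$ slack; what survives is precisely inequality \eqref{ale}, whence $J(u)<0$. The auxiliary parameter $h\in(0,1)$ is a tuning constant mediating this trade-off: it may not be too small, or \eqref{gagl-nir} fails, and not too large, or the admissible window for $R(s_{1})$ in \eqref{ale} closes. The only genuinely technical point is to carry the numerical constants correctly through the comparison-principle and Sobolev estimates so that the two hypotheses combine into the strict inequality, the remaining computations being elementary volume and gradient integrals for the piecewise-linear profile.
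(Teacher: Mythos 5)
Your test function, the splitting of $J$, and the gradient/potential estimates on core and annulus are exactly those of the paper, and your reduction is sound up to the gauge term. The gap is in the gauge-term estimate. The comparison-principle bound $q\phi_{u}\le q^{2}(-\triangle)^{-1}(u^{2})$ and the dual Sobolev estimate $\int q\phi_{u}u^{2}\le c_{3}^{-1}q^{2}\|u\|_{L^{12/5}}^{4}$ are both correct, but your assertion that \eqref{gagl-nir} is ``exactly'' the condition making this $\le\frac12 m^{2}s_{1}^{2}|B_{r}|(1-h^{2})$ fails quantitatively. Carrying the constants through: $\|u_{r}\|_{L^{12/5}}^{4}\le s_{1}^{4}\big(\tfrac{4\pi}{3}\big)^{5/3}(r+1)^{5}$, and inserting the bound on $q$ from \eqref{gagl-nir} (noting $48^{1/3}\pi^{2/3}=3\,(\tfrac{4\pi}{3})^{2/3}$) turns your estimate into $\int q\phi_{u_{r}}u_{r}^{2}\le \frac{4\pi}{9}\,\frac{(1-h)^{2}}{h^{2}}\,s_{1}^{2}\,\frac{(r+1)^{5}}{r^{2}}$, whereas the budget you must fit under is $(1-h^{2})\|u_{r}\|_{2}^{2}\ge\frac{4\pi}{3}(1-h^{2})s_{1}^{2}r^{3}$. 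The chain therefore closes only if $\frac{1-h}{3h^{2}(1+h)}\le\frac{r^{5}}{(r+1)^{5}}<1$, i.e. only if $3h^{3}+3h^{2}+h-1>0$, roughly $h\gtrsim 0.39$. The proposition allows any $h\in(0,1)$ compatible with \eqref{ale} and \eqref{gagl-nir}, and small $h$ is genuinely admissible (e.g. $m$ large and $r$ large, with $R(s_{1})$ close to $-\frac12 m^{2}s_{1}^{2}$), so in that regime your proof does not produce $J(u_{r})<0$. The loss comes from replacing the solution $\phi_{u_{r}}$ by the global Coulomb potential and then by an $L^{12/5}$ norm, which discards the screening structure of \eqref{due-sw-ab}.

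The paper avoids this by never leaving the variational characterization: it writes $\int q\phi_{u_{r}}u_{r}^{2}=\|u_{r}\|_{2}^{2}-K(u_{r})$ with $K(u_{r})=\min_{\phi}K(u_{r},\phi)$, restricts to radial $\phi$ (legitimate since $u_{r}$ is radial), and introduces the level-set radius $\rho_{\phi}=\inf\set{\rho>0:\ \phi\le\frac1q(1-h)\ \text{for}\ |x|>\rho}$. The Sobolev inequality \eqref{uso} is applied to $\phi$ (not to $u$), giving $\int|\phi|^{6}\gtrsim\frac{(1-h)^{6}}{q^{6}}\rho_{\phi}^{3}$ on the ball, while $(q\phi-1)^{2}\ge h^{2}$ outside it; minimizing the resulting one-variable function $f(\rho)$ shows, precisely under \eqref{gagl-nir}, that $f$ is increasing, whence $K(u_{r})\ge h^{2}\|u_{r}\|_{2}^{2}$, i.e. $\int q\phi_{u_{r}}u_{r}^{2}\le(1-h^{2})\|u_{r}\|_{2}^{2}$ for \emph{every} $h\in(0,1)$. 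With that sharper bound your final bookkeeping with \eqref{ale} closes exactly as you describe. To salvage your route you would either need to strengthen \eqref{gagl-nir} by an $h$-dependent constant or restrict to $h$ near $1$ (which happens to suffice for the application in Theorem \ref{mio-ab}, where $h$ may be chosen close to $1$), but as a proof of the proposition as stated it is incomplete.
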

\begin{proof}
We first analyze the term $\int m^{2}q\phi_{u}u^{2}$ in $J(u)$. We recall from \cite{bf} that, for any fixed $u\in H^{1}$, the solution $\phi_{u}$ of \eqref{due-sw-ab} is the unique critical point of the functional
\[
K(u,\phi) = \int_{\R^{3}}\, \Big( |\nabla \phi|^{2} + (q\phi -1)^{2} u^{2}\Big)\, dx\, ,
\]
in particular $\phi_{u}$ is the global minimizer of $K(u,\phi)$. The functional $K(u)$ defined in \eqref{kappa} satisfies $K(u) = K(u,\phi_{u})$, and it follows that
\[
\frac 12 m^{2} \int_{R^{3}} q\phi_{u}u^{2}\, dx = \frac 12 m^{2} \Big(\|u\|_{2}^{2} - K(u)\Big)
\]
Hence, letting
\begin{equation}\label{i}
I(u):= \min_{\phi\in \dot H^{1}} \Big(K(u,\phi) - \|u\|^{2}_{2}\Big) = K(u) - \|u\|^{2}_{2}
\end{equation}
we write
\[
\frac 12 m^{2} \int_{R^{3}} q\phi_{u}u^{2}\, dx = -\frac 12 m^{2} I(u) = -\frac 12 m^{2} \min_{\phi\in \dot H^{1}} \Big(K(u,\phi) - \|u\|^{2}_{2}\Big)\, .
\]
Let $s_{1},r>0$ and $h\in (0,1)$ such that (W4) and \eqref{gagl-nir} are satisfied. Then we define
\begin{equation}\label{ur}
u_{r}(x):= \left\{ \begin{array}{ll} s_{1}\, , & \text{if }\, |x|\le r\\ s_{1}(r+1-|x|)\, , & \text{if }\, r\le |x|\le r+1\\ 0\, , & \text{if }\, |x|\ge r+1\end{array} \right.
\end{equation}
for which
\begin{equation}\label{2ur}
\|u_{r}\|_{2}^{2} = \frac 43 \pi s_{1}^{2} r^{3} + 4\pi s_{1}^{2} \int_{r}^{r+1} (r+1-t)^{2}\,t^{2} dt
\end{equation}
Then we claim that
\begin{equation}\label{claim}
I(u_{r}) \ge (h^{2}-1) \|u_{r}\|_{2}^{2}\, .
\end{equation}
We first show that \eqref{claim} implies $J(u_{r})<0$. We have
\begin{align*}
J(u_{r}) = & \int_{\R^{3}}\, \Big( \frac 12 |\nabla u_{r}|^{2} + R(u_{r})\Big)\, dx -\frac 12 m^{2}I(u_{r}) \le\\ 
\le & \int_{\R^{3}}\, \Big( \frac 12 |\nabla u_{r}|^{2} + R(u_{r})\Big)\, dx -\frac 12 m^{2} (h^{2}-1) \|u_{r}\|_{2}^{2} = \\
= & \int_{\R^{3}}\, \Big( \frac 12 |\nabla u_{r}|^{2} + W(u_{r})\Big)\, dx -\frac 12 m^{2} h^{2} \|u_{r}\|_{2}^{2}
\end{align*}
Now, using \eqref{ur}, the fact that $W$ is non-decreasing in $(0,s_{1})$ by (W4) and \eqref{2ur}, we have
\[
J(u_{r}) \le \frac{2\pi}{3} \, s_{1}^{2}\Big( (r+1)^{3}-r^{3} \Big) + \frac{4\pi}{3} \, W(s_{1}) (r+1)^{3} - \frac{2\pi}{3} \,m^{2}h^{2}s_{1}^{2}r^{3}
\]
which implies $J(u_{r})<0$ by \eqref{ale}.

It remains to prove \eqref{claim}. First, since $u_{r}$ is radially symmetric, by Lemma 6 in \cite{bf} and Proposition 2.2 in \cite{mugnai}, the minimum $K(u)$ will be achieved for $\phi_{u}$ radially symmetric. Hence by \eqref{i}, and the Gagliardo-Nirenberg inequality \eqref{uso}, we have
\[
I(u_{r}) \ge \inf_{\phi\in \dot H^{1}_{r}} \Big[ c_{3} \Big( \int_{\R^{3}}\, |\phi|^{6}\, dx \Big)^{\frac 13} + \int_{\R^{3}} (q\phi-1)^{2}\, u_{r}^{2}\, dx\Big] - \|u_{r}\|_{2}^{2}
\]
where $\dot H^{1}_{r}$ is the set of radially symmetric functions in $\dot H^{1}$. For any $\phi\in \dot H^{1}_{r}$ we define
\[
\rho_{\phi} := \inf\set{ \rho>0\, :\, \phi(x) \le \frac 1q (1-h) \, \text{for all }\, |x|>\rho}
\]
and write
\[
\int_{\R^{3}}\, |\phi|^{6}\, dx \ge \int_{B(0,\rho_{\phi})}\, \Big[ \frac 1q (1-h) \Big]^{6}\, dx =
\frac{4\pi}{3} \frac{(1-h)^{6}}{q^{6}} \rho_{\phi}^{3}
\]
where $B(0,\rho)$ is the ball in $\R^{3}$ centered in 0 and of radius $\rho$, and, using \eqref{ur},
\begin{align}
& \int_{\R^{3}} (q\phi-1)^{2}\, u_{r}^{2}\, dx \ge \int_{\R^{3}\setminus B(0,\rho_{\phi})}\, h^{2}\, u_{r}^{2} \, dx =\\
& = \left\{ \begin{array}{ll} \frac{4\pi}{3} h^{2} s_{1}^{2} \Big( r^{3} - \rho_{\phi}^{3} \Big) + 4\pi h^{2} s_{1}^{2} \int_{r}^{r+1} (r+1-t)^{2}\,t^{2} dt\, , & \text{if }\, \rho_{\phi}<r \\ 4\pi h^{2} s_{1}^{2} \int_{\rho_{\phi}}^{r+1} (r+1-t)^{2}\,t^{2} dt\, , & \text{if }\, r\le \rho_{\phi}<r+1\\ 0\, , & \text{if }\, \rho_{\phi}\ge r+1
\end{array} \right. \label{fr}
\end{align}
Hence, if we define the function $f(\rho)$ on $\R^{+}$ by
\[
f(\rho) := c_{3} \Big(\frac{4\pi}{3}\Big)^{\frac 13} \frac{(1-h)^{2}}{q^{2}}\, \rho + \int_{\R^{3}\setminus B(0,\rho)}\, h^{2}\, u_{r}^{2} \, dx 
\]
it follows that
\[
I(u_{r}) \ge \inf_{\R^{+}}\, f(\rho) - \|u_{r}\|_{2}^{2}
\]
The function $f$ is continuous and by \eqref{fr}
\[
f'(\rho) = \left\{ \begin{array}{ll} c_{3} \Big(\frac{4\pi}{3}\Big)^{\frac 13} \frac{(1-h)^{2}}{q^{2}} - 4\pi h^{2} s_{1}^{2} \rho^{2} \, , & \text{if }\, \rho<r \\ c_{3} \Big(\frac{4\pi}{3}\Big)^{\frac 13} \frac{(1-h)^{2}}{q^{2}} - 4\pi h^{2} s_{1}^{2} (r+1-\rho)^{2}\,\rho^{2}\, , & \text{if }\, r\le \rho<r+1\\ c_{3} \Big(\frac{4\pi}{3}\Big)^{\frac 13} \frac{(1-h)^{2}}{q^{2}}\, , & \text{if }\, \rho\ge r+1
\end{array} \right.
\]
Then if \eqref{gagl-nir} holds there are no critical points and $f$ is increasing in $(0,r)$. Moreover,
\[
4\pi h^{2} s_{1}^{2} (r+1-\rho)^{2}\,\rho^{2} \le 4\pi h^{2} s_{1}^{2} r^{2} \quad  \text{if }\, r\le \rho<r+1
\]
hence \eqref{gagl-nir} implies that $f'(\rho)>0$ also in $(r,r+1)$. It follows that $f$ is an increasing function. Hence
\[
I(u_{r}) \ge \inf_{\R^{+}}\, f(\rho) - \|u_{r}\|_{2}^{2} = f(0) - \|u_{r}\|_{2}^{2} = (h^{2}-1) \|u_{r}\|_{2}^{2}
\]
and \eqref{claim} is proved. This finishes the proof of the proposition.
\end{proof}
We now discuss assumptions \eqref{ale} and \eqref{gagl-nir}. First of all \eqref{ale} can be written as
\begin{equation}\label{ale1}
0\le W(s_{1}) < \frac 12 s_{1}^{2} \Big[ (1+m^{2}h^{2}) \frac{r^{3}}{(r+1)^{3}}\, - 1 \Big]
\end{equation}
where the inequality on the left is satisfied by (W1), hence it is necessary that
\begin{equation}\label{nec-ale}
m^{2}h^{2}r^{3}-3r^{2}-3r-1>0\, .
\end{equation}
So, for example, if we fix $h$ and $r$ such that \eqref{nec-ale} is satisfied, then we choose $s_{1}$ so that \eqref{gagl-nir} is satisfied, and impose \eqref{ale1} on $W$ at that $s_{1}$. Notice that \eqref{ale1} is not in contradiction with (W3) which prescribes the behaviour of $R$ at $s=0$.

Putting together Propositions \ref{critvinc-ab}, \ref{prop-bon-sigma} and \ref{nuove-q}, we prove that
\begin{corollary}\label{sol-ogni-q}
For any fixed $q>0$, let $W(s), s_{1}, r$ and $h$ satisfy (W0)-(W4), \eqref{ale} and \eqref{gagl-nir}. Then there exist solitons solutions to system \eqref{uno-sw-ab}-\eqref{due-sw-ab} for any electric charge $\Ch\in (q \sigma_{g}(u_{r}),\, q\sigma_{_{G}}(u_{r}))$, where $u_{r}(x)$ is defined in \eqref{ur} and $\sigma_{g,_{G}}(u_{r})$ are given by \eqref{s-picc} and \eqref{s-gran}.
\end{corollary}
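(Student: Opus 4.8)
The plan is to assemble the corollary directly from the three preceding propositions; it carries essentially no new analytic content beyond the explicit witness $u_r$ produced in Proposition \ref{nuove-q}, the work being to track how that witness pins down a concrete interval of admissible charges and to keep the factor $q$ straight in the charge identification.

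First I would invoke Proposition \ref{nuove-q}: since the hypotheses (W0)--(W4), \eqref{ale} and \eqref{gagl-nir} are assumed for the given $q$, the explicit function $u_r$ defined in \eqref{ur} satisfies $J(u_r)<0$, i.e. $u_r\in J^-$. I would then record the two elementary facts that make the target interval meaningful: $K(u_r)>0$ (because $K\ge 0$ always by \eqref{alt} and $K$ never vanishes on $H^1$, as noted in the proof of Proposition \ref{prop-bon-sigma}), and consequently $\sqrt{2K(u_r)|J(u_r)|}>0$. Hence $\sigma_g(u_r)<\sigma_{_{G}}(u_r)$ by \eqref{s-picc}--\eqref{s-gran}, so $(\sigma_g(u_r),\sigma_{_{G}}(u_r))$ is a genuine nonempty open interval.

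Next, for each $\sigma$ in this interval I would apply the equivalence established inside the proof of Proposition \ref{prop-bon-sigma}, namely $\Lambda(u,\sigma)=E_\sigma(u)/\sigma\ge m$ if and only if $\sigma\notin(\sigma_g(u),\sigma_{_{G}}(u))$, taken at the single function $u=u_r$. This gives $E_\sigma(u_r)<m\sigma$ for every $\sigma\in(\sigma_g(u_r),\sigma_{_{G}}(u_r))$, i.e. the hylomorphy ratio drops below $m$. By the Benci--Fortunato result quoted before Proposition \ref{prop-bon-sigma} (equivalently, since $u_r\in J^-$ forces $(\sigma_g(u_r),\sigma_{_{G}}(u_r))\subseteq(\sigma_g,\sigma_{_{G}})$, one may simply apply Proposition \ref{prop-bon-sigma}), the energy $E(u,\omega)$ then admits a global minimizer on $\Sigma^S_\sigma$.

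Finally I would convert this minimizer into the asserted soliton. By Proposition \ref{critvinc-ab} the minimizing couple $(u,\omega)$ on $\Sigma^S_\sigma$, together with $\phi=\phi_u$, solves the system \eqref{uno-sw-ab}--\eqref{due-sw-ab}, and by the analog of Theorem \ref{orbstab-min} (again from Benci--Fortunato) this solitary wave is a soliton; its electric charge is $\Ch=q\sigma$ by the very definition of $\Sigma^S_\sigma$. Letting $\sigma$ sweep $(\sigma_g(u_r),\sigma_{_{G}}(u_r))$, the charge $q\sigma$ sweeps $(q\sigma_g(u_r),q\sigma_{_{G}}(u_r))$, which is the claim. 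There is no genuine obstacle here: the only points needing care are confirming that the target interval is nonempty and that the charge is correctly read off as $q\sigma$, both of which reduce to the facts $J(u_r)<0$ and $K(u_r)>0$ secured in the first step.
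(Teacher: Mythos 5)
Your proposal is correct and follows essentially the same route as the paper's own proof: Proposition \ref{nuove-q} gives $u_{r}\in J^{-}$, hence $\sigma_{g}\le \sigma_{g}(u_{r})$ and $\sigma_{_{G}}\ge \sigma_{_{G}}(u_{r})$, so Proposition \ref{prop-bon-sigma} yields a global minimizer of $E$ on $\Sigma^{S}_{\sigma}$ for each $\sigma\in(\sigma_{g}(u_{r}),\sigma_{_{G}}(u_{r}))$, and Proposition \ref{critvinc-ab} together with the Benci--Fortunato stability results converts it into a soliton of charge $\Ch=q\sigma$. Your added checks (nonemptiness of the interval via $K(u_{r})>0$, and the direct verification $E_{\sigma}(u_{r})<m\sigma$ at the witness $u_{r}$) are harmless refinements of the same argument, the latter being exactly the equivalence already used inside the proof of Proposition \ref{prop-bon-sigma}.
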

\begin{proof}
By Proposition \ref{nuove-q} it holds $u_{r}\in J^{-}$, whence $\sigma_{g,_{G}}$ defined in Proposition \ref{prop-bon-sigma} satisfy
\[
\sigma_{g} \le \sigma_{g}(u_{r})\, ,\qquad \sigma_{_{G}}\ge \sigma_{_{G}}(u_{r})\, .
\]
Hence $E(u,\omega)$ admits a point of global minimum on $\Sigma_{\sigma}^{S}$ for all \[
\sigma \in (\sigma_{g}(u_{r}),\, \sigma_{_{G}}(u_{r}))\, ,
\]
and there exists a triple $(u,\omega,\phi_{u})$ which is a solution to the system \eqref{uno-sw-ab}-\eqref{due-sw-ab} for all $\Ch \in (q \sigma_{g}(u_{r}),\, q\sigma_{_{G}}(u_{r}))$. That this solution is a soliton is given by Benci and Fortunato results in \cite{bf-stab-g}.
\end{proof}
Finally we show that it is possible to have soliton solutions with arbitrarily large electric charge by changing the interaction parameter $q$.
\begin{theorem}\label{mio-ab}
Let $W$ be a non-decreasing function satisfying (W0)-(W3) of Section \ref{sec:nlkg}. Then for any $\bar \Ch>0$ there exists $q>0$ such that the system \eqref{uno-sw-ab}-\eqref{due-sw-ab} admits a soliton solution with electric charge $\Ch\ge \bar \Ch$.
\end{theorem}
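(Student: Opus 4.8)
The plan is to apply Corollary \ref{sol-ogni-q} with the profile $u_r$ of \eqref{ur}, and to exhibit a choice of the parameters $(s_{1},h,r,q)$ for which all of its hypotheses hold while the upper endpoint $q\,\sigma_{_{G}}(u_r)$ of the admissible charge interval diverges. Since $W$ is assumed non-decreasing, the monotonicity requirement in (W4) is automatic for every $s_{1}$, so the only genuine constraints I must respect are the right inequality in \eqref{ale} (equivalently \eqref{ale1}), the necessary condition \eqref{nec-ale}, and the coupling bound \eqref{gagl-nir}.

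First I would freeze $s_{1}$ and $h$ using only the qualitative information on $W$. By (W2) there is $s_{1}>0$ with $R(s_{1})<0$; keeping this $s_{1}$ fixed, the right-hand side of \eqref{ale1} increases to $\frac12 s_{1}^{2}m^{2}h^{2}$ as $r\to\infty$, so \eqref{ale1} holds for all large $r$ as soon as $W(s_{1})<\frac12 m^{2}h^{2}s_{1}^{2}$, i.e. $R(s_{1})<-\frac12 m^{2}(1-h^{2})s_{1}^{2}$. Because the right side of this last inequality tends to $0^{-}$ as $h\to 1^{-}$ while $R(s_{1})$ is a fixed negative number, I can fix $h\in(0,1)$ close enough to $1$ to guarantee it. With $s_{1},h$ now fixed, \eqref{nec-ale} reads $m^{2}h^{2}r^{3}-3r^{2}-3r-1>0$ and therefore also holds for all large $r$. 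Thus every hypothesis of Corollary \ref{sol-ogni-q} except \eqref{gagl-nir} is secured once $r$ is large.

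Next I would use the remaining freedom in $q$ to enforce \eqref{gagl-nir} while tracking the charge. Setting $C:=\big(c_{3}/(48^{1/3}\pi^{2/3})\big)^{1/2}$, the bound \eqref{gagl-nir} reads $q\,s_{1}r<C(1-h)/h$, so I would take $q=q(r):=\frac{C(1-h)}{2h s_{1}r}$, which satisfies it strictly. For this choice, estimate \eqref{claim} (valid precisely because \eqref{gagl-nir} holds) together with \eqref{i} gives $K(u_{r})\ge h^{2}\|u_{r}\|_{2}^{2}$, while from \eqref{2ur} one has $\|u_{r}\|_{2}^{2}\ge \frac43\pi s_{1}^{2}r^{3}$. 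Hence, using $\sigma_{_{G}}(u_{r})\ge mK(u_{r})$ from \eqref{s-gran},
\[
q\,\sigma_{_{G}}(u_{r})\ \ge\ q\,m\,K(u_{r})\ \ge\ q\,m\,h^{2}\,\tfrac43\pi s_{1}^{2}r^{3}\ =\ \tfrac{2\pi}{3}C\,m\,h(1-h)\,s_{1}\,r^{2},
\]
which, for the fixed $s_{1}$ and $h$ chosen above, tends to $+\infty$ as $r\to\infty$.

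Finally I would conclude as follows: given $\bar\Ch>0$, choose $r$ large enough that the displayed lower bound exceeds $\bar\Ch$ and that the three $r$-conditions of the second step hold, and set $q=q(r)$. Corollary \ref{sol-ogni-q} then yields soliton solutions of every electric charge $\Ch\in(q\,\sigma_{g}(u_{r}),\,q\,\sigma_{_{G}}(u_{r}))$; since this open interval is nonempty (because $u_{r}\in J^{-}$ forces $\sigma_{g}(u_{r})<\sigma_{_{G}}(u_{r})$) and its upper endpoint exceeds $\bar\Ch$, it contains a value $\Ch\ge\bar\Ch$, producing the desired soliton. I expect the main obstacle to be the coupled nature of \eqref{gagl-nir}: it forces $q\sim 1/r$ as $r$ grows, so one must verify that the charge $\sim q\,s_{1}^{2}r^{3}$ still diverges — which it does because the binding constraint only absorbs a single power of $r$ — and simultaneously that pushing $h$ toward $1$, as \eqref{ale1} demands, does not destroy the prefactor $h(1-h)$, which stays bounded away from $0$ for any fixed admissible $h\in(0,1)$.
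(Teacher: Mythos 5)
Your proposal is correct and takes essentially the same approach as the paper's proof: the same test functions $u_{r}$, the same choice $q=\frac12\big(\frac{c_{3}}{48^{1/3}\pi^{2/3}}\big)^{1/2}\frac{1-h}{hs_{1}r}$ saturating \eqref{gagl-nir} up to a factor $2$, and the same divergent lower bound $qmK(u_{r})\ge\frac{2\pi}{3}\big(\frac{c_{3}}{48^{1/3}\pi^{2/3}}\big)^{1/2}mh(1-h)s_{1}r^{2}$ as $r\to\infty$. Your parameter bookkeeping differs only cosmetically (you fix $h$ near $1$ directly instead of the paper's auxiliary $\alpha$, and you conclude via the upper endpoint $q\sigma_{_{G}}(u_{r})$ rather than the paper's specific charge $\Ch=qmK(u_{r})$, which rests on the identical estimate).
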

\begin{proof}
Let $s_{0}$ given in (W2) and fix $s_{1}=s_{0}$. Then
\[
\lambda := \frac{W(s_{0})}{\frac 12 s_{0}^{2}} < m^{2}
\]
and there exists $\alpha>0$ such that
\begin{equation}\label{alpha}
\lambda < m^{2}(1-\alpha) - \alpha\, .
\end{equation}
We now choose $r>0$ such that
\[
\frac{r^{3}}{(r+1)^{3}} > 1-\alpha \quad \Leftrightarrow \quad r > \frac{1}{(1-\alpha)^{-\frac 13} -1}\, ,
\]
and $h$ such that \eqref{ale} is satisfied, that is, using \eqref{ale1},
\begin{equation}\label{passo1}
\lambda < (1+m^{2}h^{2}) (1-\alpha) -1 < (1+m^{2}h^{2}) \frac{r^{3}}{(r+1)^{3}} -1
\end{equation}
It is possible to choose such $h$, since \eqref{passo1} implies
\[
h^{2}\in \Big( \frac{\lambda+\alpha}{m^{2}(1-\alpha)}, \, 1 \Big)
\]
which is consistent by \eqref{alpha}.

So far we have fixed $s_{1}$ and $h$, and have found that (W4) of Proposition \ref{nuove-q} is satisfied for $r$ big enough. To satisfy also \eqref{gagl-nir} we can still move $q$. So for any $r$ let us choose
\begin{equation}\label{q}
q = \frac 12 \Big( \frac{c_{3}}{48^{\frac 13} \pi^{\frac 23}} \Big)^{\frac 12} \frac{1-h}{hs_{1}r} < \Big( \frac{c_{3}}{48^{\frac 13} \pi^{\frac 23}} \Big)^{\frac 12} \frac{1-h}{hs_{1}r}
\end{equation}
so that \eqref{gagl-nir} is satisfied, and we still can move $r$. Then we can apply Corollary \ref{sol-ogni-q} and find a soliton solution with electric charge
\[
\Ch = q m K(u_{r}) \in \Big(q \sigma_{g}(u_{r}),\, q\sigma_{_{G}}(u_{r})\Big)
\]
where $u_{r}$ is defined in \eqref{ur}. To finish the proof of the theorem, we use consecutively  \eqref{i}, \eqref{claim} and \eqref{2ur} to show that
\[
qmK(u_{r}) = qm \Big(I(u_{r}) + \| u_{r}\|_{2}^{2} \Big) \ge qmh^{2}\, \| u_{r}\|_{2}^{2} \ge \frac{4\pi}{3} qmh^{2} s_{1}^{2}r^{3}
\]
Finally from \eqref{q}, we get
\[
qmK(u_{r}) \ge \frac{2\pi}{3} \Big( \frac{c_{3}}{48^{\frac 13} \pi^{\frac 23}} \Big)^{\frac 12} mh(1-h)s_{1}r^{2}
\]
where $h$ and $s_{1}$ are fixed. Hence for any $\bar \Ch$, we can choose $r$ big enough so that
\[
r > \frac{1}{(1-\alpha)^{-\frac 13} -1} \quad \text{and} \quad qmK(u_{r}) \ge \bar \Ch
\]
and the proof is finished.
\end{proof}

\section{Local gauge theory: the non-Abelian case} \label{sec:nonabelian}

In this section we briefly review the results proved in \cite{bb}. We consider the case $N=2$ with non-Abelian gauge group $G=SU(2)$, so that $\psi(t,x) \in \C^{2}$ and $\Gamma=(\Gamma_{j})$ with $\Gamma_{j}(t,x) \in \lie = \su{2}$. The real Lie algebra $\su{2}$ is generated by $i$ times the Pauli matrices
\[
\tau_{1}:= i \sigma_{x} = \matrdpd{0}{i}{\imath}{0} \quad
\tau_{2}:= i \sigma_{y} = \matrdpd{0}{1}{-1}{0} \quad
\tau_{3}:= i \sigma_{z} = \matrdpd{i}{0}{0}{-i}
\]
By the properties of compact Lie groups, the exponential map $\exp : \su{2} \to SU(2)$
is surjective and for each $g\in SU(2)$ there exists a triple $S=(S_1,S_2,S_3) \in \R^3$ with $\sum_{j=1}^3 S_j^2 \le \pi^2$ such that
\[
g = \exp(S_1 \tau_1 + S_2 \tau_2 + S_3 \tau_3)
\]
and it is unique when $\sum_{j=1}^3 S_j^2 < \pi^2$. Given $(S_1,S_2,S_3) \in \R^3$ we introduce the notation
\begin{equation} \label{fasi-su2}
\SU(t,x) := S_1(t,x)\, \tau_1 + S_2(t,x)\, \tau_2 + S_3(t,x)\, \tau_3, \quad |\SU|^2:= |S|^2 = \sum_{j=1}^3 S_j^2
\end{equation}
and the operations
\begin{align}
& \partial_j \SU := \partial_j S_1(t,x)\, \tau_1 + \partial_j S_2(t,x)\, \tau_2 + \partial_j S_3(t,x)\, \tau_3 \label{der-pf} \\
& \SU \times \tilde \SU := (S \times \tilde S)_1\, \tau_1 + (S \times \tilde S)_2\, \tau_2+ (S \times \tilde S)_3\, \tau_3 = -\frac 12\, [\SU,\tilde \SU] \label{lb-pf} \\
& \SU \cdot \tilde \SU := S_1 \tilde S_1 + S_2 \tilde S_2 + S_3 \tilde S_3 = \frac 12 \langle \SU, \tilde \SU \rangle \label{tr-pf} \\
& \SU \, \tilde \SU = - \SU \cdot \tilde \SU - \SU \times \tilde \SU \label{pr-pf}
\end{align}
where $[\cdot,\cdot]$ is the standard Lie bracket and in the last equation on the left hand side we use the usual matrix product. Finally for the gauge potentials with abuse of notation we write
\begin{equation} \label{gf-pf}
\Gamma_{j} := \gamma_{j,1}\, \tau_{1} + \gamma_{j,2}\, \tau_{2} + \gamma_{j,3}\, \tau_{3}, \qquad j=0,1,2,3
\end{equation}
as in (\ref{fasi-su2}), and extend to $\Gamma_{j}$ the operations (\ref{der-pf}) and (\ref{lb-pf}). We then introduce the polar form for matter fields
\[
\psi(t,x) = u(t,x) \, e^{\SU(t,x)}\, \psi_0, \qquad u\in \R^+,\, |\SU(t,x)| \le \pi
\]
for a fixed vector $\psi_0 \in \C^2$, $|\psi_{0}|_{\C^{2}}=1$. We first have
\begin{lemma}[\cite{bb}] \label{lem:derivata-esp}
For all $\SU \in \su{2}$ with regular functions $S_i(t,x)$, it holds
\begin{equation} \label{form-esp}
\partial_{j} \exp(\SU) = C(\SU, \partial_{j}\SU)\, \exp(\SU)
\end{equation}
with
$$
C(\SU, \partial_{j}\SU) := \partial_j \SU + \frac 12 \, (1-\cos 2)\, (\partial_j \SU \times \SU) + \frac 12\, (2-\sin 2)\, ((\partial_j \SU \times \SU) \times \SU) \in \su{2}
$$
\end{lemma}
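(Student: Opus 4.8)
The plan is to recognize $C(\SU,\partial_j\SU)$ as the image of $\partial_j\SU$ under the classical differential of the exponential map, and then to evaluate that operator explicitly by exploiting the low-dimensional structure of $\su{2}\cong\R^{3}$.

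First I would record the general identity for the derivative of a matrix exponential along a curve: for a smoothly varying $\SU(t,x)\in\su{2}$,
\[
\big(\partial_j \exp(\SU)\big)\,\exp(\SU)^{-1} = \frac{e^{\mathrm{ad}_{\SU}}-1}{\mathrm{ad}_{\SU}}\,(\partial_j\SU)=\sum_{n\ge 0}\frac{1}{(n+1)!}\,\mathrm{ad}_{\SU}^{\,n}(\partial_j\SU),
\]
where $\mathrm{ad}_{\SU}(\cdot)=[\SU,\cdot]$. This reduces the lemma to showing that $C(\SU,\partial_j\SU)$ equals the right-hand side; the identity itself I would justify either by differentiating the power series $\exp(\SU)=\sum_n \SU^n/n!$ term by term (a Duhamel/telescoping argument) or by quoting the standard $d\exp$ formula, taking care to use the $\dot g\,g^{-1}$ convention so that the factor $\exp(\SU)$ sits on the right as in \eqref{form-esp}.

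The key simplification comes from the $\R^{3}$ picture. By \eqref{lb-pf} one has $[\SU,\tilde\SU]=-2(\SU\times\tilde\SU)$, so $\mathrm{ad}_{\SU}=-2A$ with $A:=\SU\times(\cdot)$ the cross product by the vector $S$. The operator $A$ obeys the minimal relation $A^{3}=-|\SU|^{2}A$ (equivalently $A^{2}Y=(\SU\cdot Y)\,\SU-|\SU|^{2}Y$ for any $Y$, by the vector triple product), so its eigenvalues are $0,\pm i|\SU|$ and those of $\mathrm{ad}_{\SU}$ are $0,\pm 2i|\SU|$. Since $\mathrm{ad}_{\SU}$ satisfies a cubic relation, the entire function $\Phi(z)=(e^{z}-1)/z$ of $\mathrm{ad}_{\SU}$ collapses to $c_{0}I+c_{1}\,\mathrm{ad}_{\SU}+c_{2}\,\mathrm{ad}_{\SU}^{2}$, and $c_{0},c_{1},c_{2}$ are fixed by interpolating $\Phi$ at the three eigenvalues. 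A short computation gives $c_{0}=1$ and coefficients built from $1-\cos(2|\SU|)$ and $2|\SU|-\sin(2|\SU|)$; substituting $\mathrm{ad}_{\SU}=-2A$ and applying to $\partial_j\SU$, together with $A(\partial_j\SU)=\SU\times\partial_j\SU=-(\partial_j\SU\times\SU)$ and $A^{2}(\partial_j\SU)=(\partial_j\SU\times\SU)\times\SU$, produces exactly the two cross-product terms in $C$.

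I expect the main obstacle to be bookkeeping rather than conceptual: pinning down the correct side and sign in the $d\exp$ identity, and then carrying the interpolation coefficients and the cross-product reorderings through without error to land on the precise trigonometric factors in the statement. A fully self-contained alternative that sidesteps the abstract $d\exp$ formula is available: from \eqref{pr-pf} one gets $\SU^{2}=-|\SU|^{2}I$, hence the closed form $\exp(\SU)=\cos|\SU|\,I+\tfrac{\sin|\SU|}{|\SU|}\,\SU$; differentiating this directly (using $\partial_j|\SU|=(\SU\cdot\partial_j\SU)/|\SU|$) and multiplying on the right by $\exp(-\SU)$ yields $C$ after simplification, which is a reassuring independent check of the coefficients.
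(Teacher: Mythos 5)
Your strategy is sound, and since the survey states this lemma without proof (it is imported verbatim from \cite{bb}), your argument has to stand on its own — which it essentially does. The $d\exp$ identity $\partial_j e^{\SU}\, e^{-\SU}=\sum_{n\ge 0}\frac{1}{(n+1)!}\,\mathrm{ad}_{\SU}^{\,n}(\partial_j \SU)$ with the $\dot g\,g^{-1}$ convention is the right starting point, and your bookkeeping checks out: by \eqref{lb-pf}, $\mathrm{ad}_{\SU}=-2\,\SU\times(\cdot)$, so $\mathrm{ad}_{\SU}(\partial_j\SU)=2\,(\partial_j\SU\times\SU)$ and $\mathrm{ad}_{\SU}^{2}(\partial_j\SU)=4\,((\partial_j\SU\times\SU)\times\SU)$, with the cubic relation $\mathrm{ad}_{\SU}^{3}=-4|\SU|^{2}\,\mathrm{ad}_{\SU}$. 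One technical suggestion: instead of spectral interpolation at $0,\pm 2i|\SU|$ (which requires $\SU\neq 0$ and a word about diagonalizability — true, since $\mathrm{ad}_{\SU}$ is real skew-symmetric in the $\R^3$ picture, but it must be said), it is cleaner to sum the series directly using $\mathrm{ad}_{\SU}^{2k+1}=(-4|\SU|^{2})^{k}\,\mathrm{ad}_{\SU}$ and $\mathrm{ad}_{\SU}^{2k+2}=(-4|\SU|^{2})^{k}\,\mathrm{ad}_{\SU}^{2}$; the coefficients then come out as everywhere-convergent power series, smooth across $\SU=0$, with no case distinction. Your ``self-contained alternative'' — differentiating $\exp(\SU)=\cos|\SU|\,I+\frac{\sin|\SU|}{|\SU|}\,\SU$, which follows from $\SU^{2}=-|\SU|^{2}I$ via \eqref{pr-pf} — is the more elementary route and is the natural way the computation is done at the level of \cite{bb}; either version suffices.

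The one genuine inaccuracy is your closing claim that the computation lands ``exactly'' on the two cross-product terms of the stated $C$. It does not: both of your routes produce
\begin{equation*}
C(\SU,\partial_j\SU)=\partial_j\SU+\frac{1-\cos(2|\SU|)}{2|\SU|^{2}}\,(\partial_j\SU\times\SU)+\frac{2|\SU|-\sin(2|\SU|)}{2|\SU|^{3}}\,\big((\partial_j\SU\times\SU)\times\SU\big),
\end{equation*}
which coincides with the printed coefficients $\frac 12(1-\cos 2)$ and $\frac 12(2-\sin 2)$ only when $|\SU|=1$. The printed constants cannot be correct as a general identity: expanding the $d\exp$ series for small $\SU$ gives $C=\partial_j\SU+(\partial_j\SU\times\SU)+O(|\SU|^{2})$, whereas $\frac 12(1-\cos 2)\approx 0.71\neq 1$; also, constant coefficients would make \eqref{form-esp} fail under rescaling $\SU\mapsto\lambda\SU$. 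So the statement as reproduced in this survey carries a typo — the $|\SU|$ factors in the arguments and denominators were dropped — and your derivation recovers the correct formula of \cite{bb}. You should state this discrepancy explicitly rather than asserting agreement with the printed expression; as written, that final sentence papers over exactly the point where your (correct) computation and the (miscopied) statement diverge.
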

Using (\ref{form-esp}) the covariant derivatives \eqref{deriv-cov} write
\[
D_{j} \left( u\, e^{\SU}\, \psi_{0} \right) = \left[ \partial_{j} u + u\, C(\SU,\partial_{j}\SU) + q u\, \Gamma_{j} \right]\, e^{\SU}\, \psi_{0}
\]
and
\[
| D_{j} \left( u\, e^{\SU}\, \psi_{0} \right) |_{\C^{2}}^{2} = |\partial_{j} u|^{2} + u^{2}\, \left| C(\SU,\partial_{j}\SU) + q\, \Gamma_{j} \right|^{2}
\]
Hence we have for $\LL_{0}$ defined in \eqref{lag-matter}
\[
\begin{aligned}
\LL_{0} = & \frac 12 |\partial_{t} u|^{2} - \frac 12 |\nabla u|^{2} - W(u) + \\
 & + \frac 12 \, u^{2}\, \left[ \left| C(\SU,\partial_{t}\SU) - q\, \Gamma_{0} \right|^{2} - \sum_{j=1}^{3} \left| C(\SU,\partial_{j}\SU) + q\, \Gamma_{j} \right|^{2} \right]
\end{aligned}
\]
Moreover, since $F_{kj} \in \su{2}$ we have
\[
\| F_{kj} \|^{2} = - trace(F_{kj}^{2}) = 2 \left| \partial_{k} \Gamma_{j} - \partial_{j} \Gamma_{k} -2q\, (\Gamma_{k} \times \Gamma_{j}) \right|^{2}
\]
where for $\Gamma_{j}$ we have used notation (\ref{gf-pf}) and (\ref{lb-pf}). Hence from \eqref{lag-gauge} we get
\[
\begin{aligned}
\LL_{1} = & \sum_{j=1}^{3}\, \left| \partial_{t} \Gamma_{j} + \partial_{j} \Gamma_{0} + 2q\, (\Gamma_{0} \times \Gamma_{j})\right|^{2} - \\  
& - \frac 12\, \sum_{k,j=1}^{3}\, \left| \partial_{k} \Gamma_{j} - \partial_{j} \Gamma_{k} -2q\, (\Gamma_{k} \times \Gamma_{j}) \right|^{2}
\end{aligned}
\]
Hence we get the \emph{Yang-Mills-Higgs} system of equations\footnote{This system does not coincide with classical Yang-Mills-Higgs equations because of the properties of the nonlinear term $W$. For a discussion of this remark we refer to \cite{bb}.}, which is the analogous of system \eqref{prima-kgm-p1}-\eqref{terza-kgm-p}, and is given by two equations describing the evolution of the matter field
\begin{align} 
& \partial_{t}^{2} u - \triangle u + \Big[ \sum_{j=1}^{3} | C(\SU, \partial_{j} \SU)+q\Gamma_{j} |^{2} - | C(\SU, \partial_{t} \SU) -q\Gamma_{0} |^{2}\Big]\, u   + W'(u) = 0 \label{epsi-pf}  \\
& D_{0} \Big( \left( C(\SU, \partial_{0} \SU) +q\Gamma_{0} \right) \, u^{2} \Big) - \sum_{j=1}^{3}\, D_{j} \Big( \left( C(\SU, \partial_{j} \SU) +q\Gamma_{j} \right) \, u^{2} \Big) = 0 \label{epsi-pf-cont}
\end{align}
and a system of four equations for the gauge field
\begin{align}
& 2 \, \sum_{j=1}^{3}\, D_{j} F_{0j} - q\, u^{2} [C(\SU, \partial_{t} \SU) - q\Gamma_{0}] =0 \label{egamma0-pf}\\
& 2\, D_{0} F_{0j} - 2\, \sum_{\ell\not= j}\, D_{\ell} F_{\ell j} + q\, u^{2} [C(\SU, \partial_{j} \SU) + q\Gamma_{j}] = 0, \qquad j=1,2,3 \label{egammaj-pf}
\end{align}
For fields which vanish at infinity sufficiently fast, energy and charge have the form
\[
\E = \int_{\R^3}\, \left[ 
\begin{array}{l} \frac 12 |\partial_{t} u|^{2} + \frac 12 |\nabla u|^{2} +  W(u) + \\ + \frac 12 \, u^{2}\, \left[ \left| C(\SU,\partial_{t}\SU) - q\, \Gamma_{0} \right|^{2} + \sum_{j=1}^{3} \left| C(\SU,\partial_{j}\SU) + q\, \Gamma_{j} \right|^{2} \right] + \\[0.2cm] + \sum_{j=1}^{3}\, \| F_{0j}\|^{2} + \frac 12\, \sum_{k,j=1}^{3}\, \| F_{kj}\|^{2} \end{array} \right]\, dx
\]
\[
\Ch = \int_{\R^3}\, \Big[ u^2\, \left( C(\SU, \partial_t \SU) -q \Gamma_0\right)) - 2\, \sum_{j=1}^3\, [\Gamma_j, F_{0j}] \Big]\ dx \in \su{2}
\]
We now introduce the ansatz analogous to \eqref{ansatz-nlkg-vort} to find solitary waves solutions for system (\ref{epsi-pf})-(\ref{egammaj-pf}), that is
\[
\psi(t,x) = u(r, z)\, e^{S(t,x)\, \tau_{m}}\, \psi_{0}, \qquad u\in \R^{+},\ m=1,2,3,\ |S(t,x)|\le \pi
\]
where $S(t,x) = \ell \vartheta(y) - \omega \, t$, with $\omega \in \R$, $\ell \in \Z$ and $\vartheta(y)$ defined in \eqref{angolo}. For the gauge field we assume analogously that 
\[
\Gamma_{0} = \gamma_{0}( r, z)\ \tau_{m}, \qquad \left( \begin{array}{c} \Gamma_{1} \\ \Gamma_{2} \\ \Gamma_{3} \end{array}\right) = \gamma( r, z)\, \nabla \vartheta \ \tau_{m}
\]
For these functions the matter angular momentum is given by
\[
\vec{L}_{m} = - \int_{\R^{3}}\, \ell u^{2} (\omega + q \gamma_{0}) \, (\vec{x} \times \nabla \vartheta) \, dx
\]
hence it does not vanish if $\ell\not=0$. We find the following equations for the variables $(u, \ell, \omega, \gamma_{0}, \gamma)$, with equation \eqref{epsi-pf-cont} identically satisfied,
\begin{align}
& -\triangle u(x) + \Big[ | (\ell + q\gamma(x)) \nabla \vartheta |^{2} - (\omega+q\gamma_{0}(x))^{2} \Big] \, u + f'(u) =0 \label{matter-su2} \\
& -2\, \triangle \gamma_{0}(x) + q\, (\omega + q\gamma_{0}(x))\, u^{2} =0 \label{gauss-su2} \\
& 2\, \nabla \times \left( \nabla \times \gamma(x) \nabla \vartheta \right) + q\, (\ell + q\gamma(x))\, u^{2}\, \nabla \vartheta =0 \label{rotore-su2}
\end{align}
Our main existence result is
\begin{theorem}[\cite{bb}] \label{main-su2}
Let $W$ satisfy (W0)-(W3) of Section \ref{sec:nlkg}. Then for all $\ell \in \Z$ there exists $q_0 >0$ such that for every $q \in (0,q_0)$ the system (\ref{matter-su2})-(\ref{rotore-su2}) admits a finite energy solution $(u,\omega,\gamma_{0},\gamma)$ in the sense of distributions with: $u=u(r,z)\not\equiv 0$; $\omega >0$; $\gamma_{0}=\gamma_{0}(r,z)\not\equiv 0$; $\gamma= \gamma(r,z)$. Moreover, $\gamma \equiv 0$ if and only if $\ell = 0$.
\end{theorem}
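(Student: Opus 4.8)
The plan is to reduce the non-Abelian system to the Abelian machinery underlying Theorem \ref{main-bf}, exploiting a crucial simplification produced by the chosen ansatz. The first observation is that, since every gauge quantity is taken proportional to the single generator $\tau_{m}$, all the genuinely non-Abelian (commutator) terms vanish. Indeed, with $\SU = S\,\tau_{m}$ one has $\partial_{j}\SU \times \SU = -\frac 12 (\partial_{j}S)\,S\,[\tau_{m},\tau_{m}] = 0$ by \eqref{lb-pf}, so the operator of Lemma \ref{lem:derivata-esp} collapses to $C(\SU,\partial_{j}\SU) = \partial_{j}\SU = (\partial_{j}S)\,\tau_{m}$; likewise $[\Gamma_{k},\Gamma_{j}] = \gamma^{2}\,\partial_{k}\vartheta\,\partial_{j}\vartheta\,[\tau_{m},\tau_{m}] = 0$, so the curvature $F_{kj}$ reduces to its linear part. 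Consequently \eqref{matter-su2}--\eqref{rotore-su2} take exactly the same analytic form as the Abelian Klein-Gordon-Maxwell vortex equations, with $\gamma_{0}$ playing the role of $\varphi$ and $\gamma$ that of the amplitude of the vector potential, and the entire Benci-Fortunato reduction becomes available.

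Second, I would carry out the field reduction. For $u \in H^{1}$ cylindrically symmetric and $\omega,\ell$ fixed, \eqref{gauss-su2} and \eqref{rotore-su2} are linear elliptic equations for $\gamma_{0}$ and $\gamma$. Each is the Euler-Lagrange equation of a strictly convex, coercive functional, hence admits a unique solution $\gamma_{0,u}$ and $\gamma_{u}$ in the appropriate homogeneous Sobolev spaces, depending on $u$ in a $C^{1}$ way, exactly as for the map $u\mapsto\phi_{u}$. A maximum-principle argument yields, in analogy with \eqref{alt}, the a priori bound $-\omega/q \le \gamma_{0,u} \le 0$ together with a similar pointwise control on $\gamma_{u}$; in particular $\gamma_{0,u} = O(q)$ as $q\to 0$, which is what makes the weak-coupling regime tractable. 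Substituting $(\gamma_{0,u},\gamma_{u})$ into the energy then produces a reduced energy $E(u,\omega)$ and reduced charge $C(u,\omega)$ in $u$ and $\omega$ alone, and the analogue of Proposition \ref{critvinc-ab} identifies the solutions of \eqref{matter-su2}--\eqref{rotore-su2} with the critical points of $E$ on the fixed-charge manifold $\Sigma^{S}_{\sigma}$.

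Third, I would seek a global minimizer via the hylomorphy criterion: a minimizer exists once one exhibits a test pair whose ratio $E/(-\omega K)$ lies strictly below $m$, equivalently (after writing $E_{\sigma} = J(u) + \frac 12(m^{2}K + \sigma^{2}/K)$ as in Proposition \ref{prop-bon-sigma}) a function with $J(u) < 0$. This is where the smallness of $q$ enters, and it is the main obstacle. Because $\gamma_{0,u} = O(q)$, the electric back-reaction contributes to $J$ only at higher order in $q$, so for $q$ small $J$ is governed by $\int(\frac 12 |\nabla u|^{2} + R(u))$ together with the magnetic and centrifugal terms. For $\ell = 0$ there is no centrifugal barrier and $\gamma_{u}\equiv 0$, so the NLKG profiles of Theorem \ref{beh-carica} make $J<0$ by (W2), exactly as in Proposition \ref{nuove-q}; for $\ell\not=0$ the term $\ell^{2}u^{2}/r^{2}$ forces the test function to be supported away from the axis $r=0$, and one must check that its magnetic energy together with this barrier does not push the ratio above $m$, choosing $q$ small relative to $\ell$ so that the correction $q\gamma_{u}$ is absorbed into the barrier.

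The remaining step is the concentration-compactness verification that minimizing sequences neither vanish nor dichotomize, which under the subcriticality (W3) and in the charge window where $E_{\sigma}<m\sigma$ produces a nonzero minimizer $u\not\equiv 0$; I expect this compactness argument, rather than the algebra, to absorb most of the technical effort. Finally the qualitative properties follow directly from the reduced equations: for $\ell = 0$, equation \eqref{rotore-su2} becomes homogeneous in $\gamma$ and forces $\gamma_{u}\equiv 0$, whereas for $\ell\not=0$ the nonvanishing source $q\ell u^{2}\nabla\vartheta$ makes $\gamma_{u}\not\equiv 0$, and $\gamma_{0,u}\not\equiv 0$ because \eqref{gauss-su2} carries the nonvanishing source $q\omega u^{2}$ with $\omega>0$.
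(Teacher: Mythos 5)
Your proposal follows essentially the same route as the paper: the decisive observation that the single-generator ansatz $\SU = S\,\tau_{m}$, $\Gamma_{j}\propto\tau_{m}$ kills every commutator (so $C(\SU,\partial_{j}\SU)=\partial_{j}\SU$ and $F_{kj}$ reduces to its linear part) is exactly how \eqref{matter-su2}--\eqref{rotore-su2} are derived, after which the paper, like you, reduces to the Abelian Klein--Gordon--Maxwell vortex problem and invokes the Benci--Fortunato machinery (the maps $u\mapsto\gamma_{0,u},\gamma_{u}$, the hylomorphy condition $\Lambda<m$ secured by taking $q$ small, minimization of the reduced energy at fixed charge), the only cosmetic discrepancy being a factor of $2$ and a sign normalization coming from $\|U\|^{2}=\mathrm{trace}(U^{*}U)=2|U|^{2}$ and $\Gamma_{0}=-i\varphi$, which rescaling absorbs. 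Your endgame for the qualitative properties --- equation \eqref{rotore-su2} is homogeneous in $\gamma$ when $\ell=0$, forcing $\gamma\equiv 0$, and carries the nonvanishing source $q\ell u^{2}\nabla\vartheta$ when $\ell\not=0$, while \eqref{gauss-su2} with $\omega>0$ forces $\gamma_{0}\not\equiv 0$ --- is also the paper's.
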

This theorem show the existence of a particular class of solitary waves and vortices for the Yang-Mills-Higgs system for small interaction parameter $q$. Results about stability of these solitary waves and dependence on the charge, analogous to those in Section \ref{sec:nr-abelian}, are not available at the moment. We also hope in the future to prove existence of more general soliton solutions.


\begin{thebibliography}{9}

\bibitem{badrol}
M. Badiale and S. Rolando,
A note on vortices with prescribed charge,
\emph{Adv. Nonlinear Stud.} \textbf{12} (2012), 703--716

\bibitem{nlkg-stab}
J. Bellazzini, V. Benci, C. Bonanno and A. M. Micheletti,
Solitons for the nonlinear Klein-Gordon equation,
\emph{Adv. Nonlinear Stud.} \textbf{10} (2010), 481--499

\bibitem{hylo-sol}
J. Bellazzini, V. Benci, C. Bonanno and E. Sinibaldi, 
Hylomorphic solitons in the nonlinear Klein-Gordon equation, 
\emph{ Dyn. Partial Differ. Equ.} \textbf{6} (2009), 311--334.

\bibitem{hylo-vort}
J. Bellazzini, V. Benci, C. Bonanno and E. Sinibaldi, 
On the existence of hylomorphic vortices in the nonlinear Klein-Gordon equation, 
\emph{ Dyn. Partial Differ. Equ.} \textbf{10} (2013), 1--24.


\bibitem{bb}
V. Benci and C. Bonanno,
Solitary waves and vortices in non-Abelian gauge theories with matter, 
\emph{Adv. Nonlinear Stud.} \textbf{12} (2012), 717--735.

\bibitem{bf} 
V. Benci and D. Fortunato, 
Solitary waves in Abelian gauge theories, 
\emph{Adv. Nonlinear Stud.} \textbf{8} (2008), 327--352.

\bibitem{bf-vort} 
V. Benci and D. Fortunato, 
Spinning Q-balls for the Klein-Gordon-Maxwell equations, 
\emph{Commun. Math. Phys.} \textbf{295} (2010), 639--668.

\bibitem{bf-stab-g} 
V. Benci and D. Fortunato, 
On the existence of stable charged Q-balls, 
\emph{J. Math. Phys.} \textbf{52} (2011), 093701.

\bibitem{bf12}
V. Benci and D. Fortunato,
\emph{Hylomorphic solitons and charged Q-balls: existence and stability}, 
arXiv:1212.3236 [math-ph]

\bibitem{bl}
H. Berestycki and P. L. Lions,
Nonlinear scalar field equations. I. Existence of a ground state,
\emph{Arch. Ration. Mech. Anal.} \textbf{82} (1982), 313--345.

\bibitem{bon10}
C. Bonanno,
Existence and multiplicity of stable bound states for the nonlinear Klein-Gordon equation,
\emph{Nonlinear Anal.} \textbf{72} (2010), 2031--2046.

\bibitem{cazlions}
T. Cazenave and P. L. Lions,
Orbital stability of standing waves for some nonlinear Schr\"odinger equations,
\emph{Commun. Math. Phys.} \textbf{85} (1982), 313--345.

\bibitem{col}
S. Coleman,
Q-balls,
\emph{Nucl. Phys. B} \textbf{262} (1985), 263--283.

\bibitem{cgm}
S. Coleman, V. Glaser and A. Martin,
Action minima among solutions to a class of euclidean scalar field equation,
\emph{Comm. Math. Phys.} \textbf{58} (1978), 211--221.

\bibitem{broglie}
L. de Broglie,
\emph{Nonlinear Wave Mechanics. A Casual Interpretation},
Elsevier Publishing Company, Amsterdam London New York Princeton, 1960.

\bibitem{garrisi}
D. Garrisi,
On the orbital stability of standing-wave solutions to a coupled nonlinear Klein-Gordon equation,
\emph{Adv. Nonlinear Stud.} \textbf{12} (2012), 639--658.

\bibitem{gss}
M. Grillakis, J. Shatah and W. Strauss,
Stability theory of solitary waves in the presence of symmetry, I, 
\emph{J. Funct. Anal.} \textbf{74} (1987), 160--197.

\bibitem{jost}
J. Jost,
\emph{Riemannian Geometry and Geometric Analysis}, 6th ed., Springer, Heidelberg, 2011.

\bibitem{mugnai}
D. Mugnai,
Solitary waves in Abelian gauge theories with strongly nonlinear potentials,
\emph{Ann. Inst. H. Poincar\'e Anal. Non Lin\'eaire} \textbf{27} (2010), 1055--1071.


\bibitem{yang}
Y. Yang,
\emph{Solitons in Field Theory and Nonlinear Analysis},
Springer-Verlag, New York, NY, 2001.

\end{thebibliography}
\end{document}